		\newcounter{theorem_c} 
		\newtheoremstyle{thmstyle}
		  {2mm} 
		  {2mm} 
		  {\itshape} 
		  {} 
		  {\bfseries} 
		  {.} 
		  {.5em} 
		  {} 
		\theoremstyle{thmstyle}
		\newtheorem{proposition}[theorem_c]{Proposition}
		\newtheorem{definition}[theorem_c]{Definition}
		\newtheoremstyle{exampstyle}
		  {2mm} 
		  {2mm} 
		  {} 
		  {} 
		  {\bfseries} 
		  {.} 
		  {.5em} 
		  {} 
		\theoremstyle{exampstyle}
		\newtheorem{example}[theorem_c]{Example}
		\newtheorem{remark}[theorem_c]{Remark}
\newcommand{\reals}{\mathbb{R}}
\newcommand{\complexs}{\mathbb{C}}
\newcommand{\suchthat}[2]{\left\{#1 \: \middle\vert \: #2\right\}}
\newcommand{\restrict}[2]{\left. #1 \right\vert_{#2}}
\newcommand{\ket}[1]{\left| #1 \right\rangle}
\newcommand{\bra}[1]{\left\langle #1 \right|}
\newcommand{\id}[1]{\operatorname{id}_{#1}}
\newcommand{\obj}[1]{\operatorname{obj}\left(#1\right)}
\newcommand{\mor}[1]{\operatorname{mor}\left(#1\right)}
		\newcommand{\RMatCategory}[1]{#1\operatorname{-Mat}} 
		\newcommand{\CategoryC}{\mathcal{C}}
\newcommand{\raisebox{-0.5pt}{\hbox{\input{symbols/XdotSym.tex}}}\!\!}{\raisebox{-0.5pt}{\hbox{\input{symbols/XdotSym.tex}}}\!\!} 
\newcommand{\raisebox{-2.5pt}{\hbox{\input{symbols/XmultSym.tex}}}\!}{\raisebox{-2.5pt}{\hbox{\input{symbols/XmultSym.tex}}}\!} 
\newcommand{\raisebox{-2.5pt}{\hbox{\input{symbols/XcomultSym.tex}}}\!}{\raisebox{-2.5pt}{\hbox{\input{symbols/XcomultSym.tex}}}\!} 
\newcommand{\raisebox{-2.5pt}{\hbox{\input{symbols/XunitSym.tex}}}\!}{\raisebox{-2.5pt}{\hbox{\input{symbols/XunitSym.tex}}}\!} 
\newcommand{\raisebox{-2.5pt}{\hbox{\input{symbols/XcounitSym.tex}}}\!}{\raisebox{-2.5pt}{\hbox{\input{symbols/XcounitSym.tex}}}\!} 
\newcommand{\!\!\raisebox{-2.5pt}{\hbox{\input{symbols/antipodeSym.tex}}}\!\!\!}{\!\!\raisebox{-2.5pt}{\hbox{\input{symbols/antipodeSym.tex}}}\!\!\!} 
\newcommand{\raisebox{-0.5pt}{\hbox{\input{symbols/ZdotSym.tex}}}\!\!}{\raisebox{-0.5pt}{\hbox{\input{symbols/ZdotSym.tex}}}\!\!} 
\newcommand{\raisebox{-2.5pt}{\hbox{\input{symbols/ZmultSym.tex}}}\!}{\raisebox{-2.5pt}{\hbox{\input{symbols/ZmultSym.tex}}}\!} 
\newcommand{\raisebox{-2.5pt}{\hbox{\input{symbols/ZcomultSym.tex}}}\!}{\raisebox{-2.5pt}{\hbox{\input{symbols/ZcomultSym.tex}}}\!} 
\newcommand{\!\raisebox{-2.5pt}{\hbox{\input{symbols/ZunitSym.tex}}}\!\!}{\!\raisebox{-2.5pt}{\hbox{\input{symbols/ZunitSym.tex}}}\!\!} 
\newcommand{\!\raisebox{-2.5pt}{\hbox{\input{symbols/ZcounitSym.tex}}}\!\!}{\!\raisebox{-2.5pt}{\hbox{\input{symbols/ZcounitSym.tex}}}\!\!} 
\tikzset{
  rectangle with rounded corners north west/.initial=4pt,
  rectangle with rounded corners south west/.initial=4pt,
  rectangle with rounded corners north east/.initial=4pt,
  rectangle with rounded corners south east/.initial=4pt,
}
\tikzstyle{inline text}=[text height=1.5ex, text depth=0.25ex, yshift=0.5mm]
\tikzstyle{upground}=[circuit ee IEC, thick, ground, rotate=90, scale=2]
\tikzstyle{downground}=[circuit ee IEC, thick, ground, rotate=-90, scale=1.5]
\tikzstyle{point}=[regular polygon, regular polygon sides=3, draw, scale=0.75, inner sep=-0.5pt, minimum width=9mm, fill=white, regular polygon rotate=180, tikzit fill={rgb,255: red,242; green,255; blue,92}]
\tikzstyle{wide copoint}=[fill=white, draw, shape=isosceles triangle, shape border rotate=90, isosceles triangle stretches=true, inner sep=0pt, minimum width=1.5cm, minimum height=6.12mm]
\tikzstyle{wide point}=[fill=white, draw, shape=isosceles triangle, shape border rotate=-90, isosceles triangle stretches=true, inner sep=0pt, minimum width=1.5cm, minimum height=6.12mm, yshift=-0.0mm]
\tikzstyle{wide dpoint}=[wide point, doubled]
\tikzstyle{copoint}=[regular polygon, regular polygon sides=3, draw, scale=0.75, inner sep=-0.5pt, minimum width=9mm, fill=white, tikzit fill={rgb,255: red,255; green,128; blue,0}, tikzit draw={rgb,255: red,255; green,128; blue,0}]
\tikzstyle{dot}=[inner sep=0mm, minimum width=2mm, minimum height=2mm, draw, shape=circle]
\tikzstyle{black dot}=[dot, fill={gray!30}, text depth=-0.2mm]
\tikzstyle{white dot}=[dot, fill=white, text depth=-0.2mm]
\tikzstyle{small box}=[rectangle, inline text, fill=white, draw, minimum height=5mm, yshift=-0.5mm, minimum width=5mm, font={\small}]
\tikzstyle{small gray box}=[small box, fill={gray!30}]
\tikzstyle{medium box}=[rectangle, inline text, fill=white, draw, minimum height=5mm, yshift=-0.5mm, minimum width=10mm, font={\small}]
\tikzstyle{square box}=[small box]
\tikzstyle{medium gray box}=[small box, fill={gray!30}]
\tikzstyle{semilarge box}=[rectangle, inline text, fill=white, draw, minimum height=5mm, yshift=-0.5mm, minimum width=12.5mm, font={\small}]
\tikzstyle{large box}=[rectangle, inline text, fill=white, draw, minimum height=5mm, yshift=-0.5mm, minimum width=15mm, font={\small}]
\tikzstyle{large gray box}=[small box, fill={gray!30}]
\tikzstyle{dpoint}=[point, doubled]
\tikzstyle{dcopoint}=[copoint, doubled]
\tikzstyle{boldedge}=[doubled, shorten <=-0.17mm, shorten >=-0.17mm]
\tikzstyle{normal}=[line width=0.9pt]
\tikzstyle{doubled}=[line width=1pt]
\tikzstyle{boldedge}=[doubled, shorten <=-0.17mm, shorten >=-0.17mm]
\tikzstyle{small dbox}=[small box, doubled]
\tikzstyle{white ddot}=[white dot, doubled]
\tikzstyle{black ddot}=[black dot, doubled, tikzit fill=black]
\tikzstyle{map}=[draw, shape=NEbox, inner sep=2pt, minimum height=6mm, fill=white]
\tikzstyle{box}=[draw, shape=rectangle, inner sep=2pt, minimum height=6mm, minimum width=6mm, fill=white]
\tikzstyle{dbox}=[draw, doubled, shape=rectangle, inner sep=2pt, minimum height=6mm, minimum width=6mm, fill=white]
\tikzstyle{dmap}=[draw, doubled, shape=NEbox, inner sep=2pt, minimum height=6mm, fill=white]
\tikzstyle{dmapdag}=[draw, doubled, shape=SEbox, inner sep=2pt, minimum height=6mm, fill=white]
\tikzstyle{dmapadj}=[draw, doubled, shape=SEbox, inner sep=2pt, minimum height=6mm, fill=white]
\tikzstyle{dmaptrans}=[draw, doubled, shape=SWbox, inner sep=2pt, minimum height=6mm, fill=white]
\tikzstyle{dmapconj}=[draw, doubled, shape=NWbox, inner sep=2pt, minimum height=6mm, fill=white]
\tikzstyle{map}=[draw, shape=NEbox, inner sep=2pt, minimum height=6mm, fill=white]
\tikzstyle{dashedmap}=[draw, dashed, shape=NEbox, inner sep=2pt, minimum height=6mm, fill=white]
\tikzstyle{mapdag}=[draw, shape=SEbox, inner sep=2pt, minimum height=6mm, fill=white]
\tikzstyle{mapadj}=[draw, shape=SEbox, inner sep=2pt, minimum height=6mm, fill=white]
\tikzstyle{maptrans}=[draw, shape=SWbox, inner sep=2pt, minimum height=6mm, fill=white]
\tikzstyle{mapconj}=[draw, shape=NWbox, inner sep=2pt, minimum height=6mm, fill=white]
\tikzstyle{semilarge map}=[draw, shape=NEbox, inner sep=2pt, minimum height=6mm, fill=white, minimum width=9.5mm]
\tikzstyle{semilarge dmap}=[draw, doubled, shape=NEbox, inner sep=2pt, minimum height=6mm, fill=white, minimum width=9.5mm]
\tikzstyle{kpointdag}=[kpoint adjoint]
\tikzstyle{kpointadj}=[kpoint adjoint]
\tikzstyle{kpointconj}=[kpoint conjugate]
\tikzstyle{kpointtrans}=[kpoint transpose]
\tikzstyle{kpoint common}=[draw, fill=white, inner sep=1pt, minimum height=4mm]
\tikzstyle{kpoint sc}=[shape=cornerpoint, kpoint common]
\tikzstyle{kpoint adjoint sc}=[shape=cornercopoint, kpoint common]
\tikzstyle{kpoint}=[shape=cornerpoint, shorten left=5pt, kpoint common, tikzit fill={rgb,255: red,255; green,128; blue,0}]
\tikzstyle{kpoint adjoint}=[shape=cornercopoint, shorten left=5pt, kpoint common, tikzit fill={rgb,255: red,255; green,128; blue,0}]
\tikzstyle{kpoint conjugate}=[shape=cornerpoint, shorten right=5pt, kpoint common]
\tikzstyle{kpoint transpose}=[shape=cornercopoint, shorten right=5pt, kpoint common]
\tikzstyle{kpoint symm}=[shape=cornerpoint, shorten left=5pt, shorten right=5pt, kpoint common]
\tikzstyle{wide kpoint}=[kpoint, minimum width=1 cm, inner sep=2pt]
\tikzstyle{wide kpointdag}=[kpointdag, minimum width=1 cm, inner sep=2pt]
\tikzstyle{wide kpointconj}=[kpointconj, minimum width=1 cm, inner sep=2pt]
\tikzstyle{wide kpointtrans}=[kpointtrans, minimum width=1 cm, inner sep=2pt]
\tikzstyle{wider kpoint}=[kpoint, minimum width=1.25 cm, inner sep=2pt]
\tikzstyle{wider kpointdag}=[kpointdag, minimum width=1.25 cm, inner sep=2pt]
\tikzstyle{wider kpointconj}=[kpointconj, minimum width=1.25 cm, inner sep=2pt]
\tikzstyle{wider kpointtrans}=[kpointtrans, minimum width=1.25 cm, inner sep=2pt]
\tikzstyle{dkpoint}=[kpoint, doubled, tikzit fill={rgb,255: red,255; green,85; blue,210}]
\tikzstyle{wide dkpoint}=[wide kpoint, doubled, tikzit fill={rgb,255: red,68; green,255; blue,0}]
\tikzstyle{dkpointdag}=[kpoint adjoint, doubled]
\tikzstyle{wide dkpointdag}=[wide kpointdag, doubled]
\tikzstyle{label}=[fill=white, draw=white, shape=circle, tikzit draw={rgb,255: red,10; green,26; blue,255}, tikzit fill={rgb,255: red,0; green,12; blue,255}, font={\small}]
\tikzstyle{squarelabel}=[fill=white, draw=white, shape=rectangle, tikzit draw=black]
\tikzstyle{eslabel}=[tikzit draw={rgb,255: red,255; green,191; blue,191}, tikzit fill={rgb,255: red,255; green,191; blue,191}, font={\tiny}]
\tikzstyle{large dmap}=[draw, doubled, shape=NEbox, inner sep=2pt, minimum height=6mm, fill=white, minimum width=12mm]
\tikzstyle{gray point}=[point, fill={gray!40!white}]
\tikzstyle{gray dpoint}=[gray point, doubled, tikzit draw={rgb,255: red,128; green,128; blue,128}, tikzit fill={rgb,255: red,128; green,128; blue,128}]
\tikzstyle{gray copoint}=[copoint, fill={gray!40!white}, tikzit fill={rgb,255: red,128; green,128; blue,128}]
\tikzstyle{gray dcopoint}=[gray copoint, doubled, tikzit fill={rgb,255: red,128; green,128; blue,128}]
\tikzstyle{circlenew}=[draw=black, shape=circle, inner sep=1pt]
\tikzstyle{blue label}=[text=NavyBlue, tikzit draw={rgb,255: red,0; green,96; blue,167}, tikzit fill={rgb,255: red,35; green,68; blue,255}]
\tikzstyle{new edge style 1}=[-, line width=1pt, shorten <=-0.17mm, shorten >=-0.17mm, tikzit draw={rgb,255: red,204; green,0; blue,3}]
\tikzstyle{diredge}=[-, postaction=decorate, decoration={markings, mark=at position 0.55 with \edgearrow}]
\tikzstyle{bold diredge}=[-, diredge, line width=1pt, tikzit draw={rgb,255: red,128; green,0; blue,128}]
\tikzstyle{grey}=[-, draw={rgb,255: red,188; green,188; blue,188}]
\tikzstyle{classical}=[-, dashed, tikzit draw={rgb,255: red,255; green,128; blue,0}]
\tikzstyle{reddashed}=[-, dashed, draw={rgb,255: red,0; green,128; blue,128}, postaction=decorate, decoration={markings, mark=at position 0.55 with \edgearrow}]
\tikzstyle{reddahednoarrow}=[-, dashed, draw={rgb,255: red,179; green,40; blue,40}]
\tikzstyle{arrow edge}=[-, ->, draw={rgb,255: red,191; green,191; blue,191}, tikzit draw={rgb,255: red,191; green,191; blue,191}, ultra thick]
\tikzstyle{tarrow edge}=[-, ->, draw={rgb,255: red,191; green,191; blue,191}, tikzit draw={rgb,255: red,191; green,191; blue,191}]
\tikzstyle{gray edge}=[-, draw={rgb,255: red,191; green,191; blue,191}, tikzit draw={rgb,255: red,191; green,191; blue,191}, ultra thick]
\tikzstyle{lightgrayedge}=[-, draw={rgb,255: red,207; green,207; blue,207}]
\tikzstyle{green edge}=[-, tikzit draw={rgb,255: red,128; green,128; blue,0}, draw={rgb,255: red,128; green,128; blue,0}]
\tikzstyle{red edge}=[-, draw={rgb,255: red,191; green,0; blue,64}, tikzit draw={rgb,255: red,191; green,0; blue,64}]
\tikzstyle{arrow edge black}=[-, ->]
\tikzstyle{solid blue}=[-, draw={rgb,255: red,0; green,96; blue,167}, tikzit draw={rgb,255: red,0; green,96; blue,167}]
\tikzstyle{classical blue}=[-, draw={rgb,255: red,0; green,96; blue,167}, tikzit draw={rgb,255: red,0; green,96; blue,167}, dashed]
\newcommand{\edgearrow}{{\arrow[black]{>}}}
\DeclareMathOperator{\Tr}{Tr}
\newcommand{\myfunction}[2]{#1\left(#2\right)}
\newcommand{\inputs}[1]{\myfunction{\operatorname{in}}{#1}}
\newcommand{\outputs}[1]{\myfunction{\operatorname{out}}{#1}}
\newcommand{\nodes}[1]{\myfunction{\operatorname{nodes}}{#1}}
\newcommand{\edges}[1]{\myfunction{\operatorname{edges}}{#1}}
\newcommand{\events}[1]{\myfunction{\operatorname{ev}}{#1}}
\newcommand{\tailSym}{\operatorname{tail}}
\newcommand{\headSym}{\operatorname{head}}
\newcommand{\tail}[1]{\myfunction{\tailSym}{#1}}
\newcommand{\head}[1]{\myfunction{\headSym}{#1}}
\newcommand{\sysmapSym}{\operatorname{sys}}
\newcommand{\procmapSym}{\operatorname{proc}}
\newcommand{\sysmap}[1]{\myfunction{\sysmapSym}{#1}}
\newcommand{\procmap}[1]{\myfunction{\procmapSym}{#1}}
\newcommand{\diagramproc}[1]{\llbracket #1 \rrbracket}
\newcommand{\syslabelSym}[2]{\operatorname{syslabel}_{#1, #2}}
\newcommand{\syslabel}[3]{\myfunction{\syslabelSym{#1}{#2}}{#3}}
\newcommand{\definiteCausAssoc}[1]{\myfunction{\operatorname{DCaus}}{#1}}
\definecolor{burntorange}{rgb}{0.8, 0.33, 0.0} 
\title{Giving Operational Meaning to\\the Superposition of Causal Orders}
\author{
    Nicola Pinzani
    \institute{University of Oxford}
    \email{nicola.pinzani@cs.ox.ac.uk}
    \and
    Stefano Gogioso
    \institute{University of Oxford}
    \email{stefano.gogioso@cs.ox.ac.uk}
}
\begin{document}

\maketitle

\begin{abstract}
    In this work, we give rigorous operational meaning to superposition of causal orders. This fits within a recent effort to understand how the standard operational perspective on quantum theory could be extended to include indefinite causality. The mainstream view, that of ``process matrices", takes a top-down approach to the problem, considering all causal correlations that are compatible with local quantum experiments. Conversely, we pursue a bottom-up approach, investigating how the concept of indefiniteness emerges from specific characteristics of generic operational theories. Specifically, we pin down the operational phenomenology of the notion of non-classical (e.g. ``coherent") control, which we then use to formalise a theory-independent notion of control (e.g. ``superposition") of causal orders. To validate our framework, we show how salient examples from the literature can be captured in our framework.
\end{abstract}

\section{Introduction}

When modelling the interaction of localised processes (aka ``operations'' or ``experiments'') in spacetime, the causal structure between the corresponding events can be captured by circuit-like diagrams.
Such interactions arise as combinations of parallel and sequential compositions of processes, relating to space-like and time-like separation of events respectively: if the processes themselves respect a notion of \emph{causality}---defined as the impossibility of transmitting information from the future to the past---then it can be shown that causal relations between the events at which the processes take place can unambiguously be described by a directed acyclic graph, each edge capturing the forward-in-time information flow from the output of one process to the input of another \cite{Coecke2012,Pinzani2019,DAriano2018}.
For example, taking quantum instruments as processes/operations respects such a notion of causality, but adding post-selection breaks it, resulting in the possibility of signalling from the future.

A 1977 result by Malament \cite{Malament1977} shows that---in the continuous limit---little more than the above would be needed to reconstruct the structure of spacetime itself: knowledge of the causal relationship between its events is enough to reconstruct a spacetime up to conformal equivalence, as long as that spacetime satisfies some mild requirement.
\footnote{Specifically, the requirement is for the spacetime to be both past-distinguishing and future-distinguishing: two events are the same if the have the same causal future or the same causal past. This excludes a number of causally problematic scenarios, such as the existence of closed time-like curves.}
An extension of this identification of causal structure with directed graphs was recently used by the authors to study operational models of quantum information in the presence of causal anomalies, such as closed time-like curves \cite{Pinzani2019}.

In this work, we examine another scenario in which the operational approach to quantum theory needs to be extended in order to cope with exotic causal structure: that where the background spacetime is itself allowed to be in a superposition.
In doing so, we also dispel some of the fog surrounding the notion of \emph{control of causal order}: this is the situation in which the outcome of one process may influence the causal order of subsequent processes, both classically (as a probabilistic mixture) and coherently (as a superposition).

Reasoning about operational theories in the context where spacetime itself becomes a dynamical variable carries a number of additional complications.
For example, thinking about the set-up and outcomes of an experiment presupposes the existence of causally stable surroundings, where the notions of cause and consequence take their familiar form independently of the specific processes being performed.
Failing these assumptions, how can we make sure that a mathematical model of quantum theory in the presence of dynamical spacetimes is even empirically testable?
This is an important question, upon which many others stumbled before us.
For example, the following reflection can be found in a prominent piece of literature on the application of sheaves and topoi to quantum theory \cite{Isham2011,Doering2010}:

\begin{displayquote}[Chris J. Isham \cite{Isham2011}]
    \emph{
    ``[A]round fifteen years ago, I came to the conclusion that the use of standard quantum theory was fundamentally inconsistent, and I stopped working in quantum gravity proper [...] [W]hat could it mean to `measure' properties of space or time if the very act of measurement requires a spatiotemporal background within which it is made?''
    }
\end{displayquote}

\noindent
In this work, we set out to endow superposition of causal orders with rigorous operational meaning, through the development of suitable categorical semantics.

We start by considering a probabilistic theory of processes, modelling the operational aspects of some physical theory (such as quantum theory).
Given such a theory, we want to provide a sound way of constructing and characterising localised processes which don't take place against a fixed causal background, but rather against a ``superposition'' of causal backgrounds, determined by some ``wave-function'' over the set of all fixed causal backgrounds compatible with the processes in question.
We do so by first defining a theory-independent notion of ``control'' of processes, accommodating both the classical case---where the choice of process to execute can be captured by some hidden variable---and the coherent case---where it is not possible to establish which one process was executed without reference to a specific measurement context.
Armed with such a notion, we show how to construct superpositions of diagrams, giving categorical semantics to the execution of processes and operations against a dynamic causal background.

A significant body of literature exists which investigates the informational advantage of superposition of channels and causal orders \cite{oi2003,abbott2018,Chiribella2019,Araujo2014,salek2018,Oreshkov2012} and the possibility of experimentally detecting such superpositions \cite{Marletto2017,Marletto2019,Rovelli2019,Bruckner2019}.
In particular, there has been some recent speculation about the possibility of realising a genuine ``quantum switch'' by coherently controlling the `spatial degree of freedom' \cite{paunkovic2019causal}.
Unfortunately, however, the issue of delimiting a tight operational setting in which to interpret the results is not ordinarily viewed as a necessity, sometimes leading to misinterpretation of their physical significance.
The goal of this paper is then to provide a rigorous standpoint from which to discuss the operational phenomenology of causal superposition and control.
If quantum gravity is ever to be observed, we must first be in possession of the mathematical tools to rigorously understand, debate and communicate those observations.

The categorical semantics in this work are given within the process-theoretic framework for probabilistic theories introduced by \cite{Gogioso2018}, which is briefly summarised in Appendix~\ref{appendix:cpt}.
The same semantics can easily be adapted to other popular frameworks, such as that of Operational Probabilistic Theories (OPTs) \cite{chiribella2010,chiribella2017}.
The main differences between the framework presented here and that of OPTs are as follows:

\begin{enumerate}
    \item[(i)] normalisation is presented constructively, by choice of discarding map and an equation defining the normalised (aka ``deterministic'') processes;
    \item[(ii)] the whole convex cone of processes is considered, including super-normalised components which can be used to decompose processes into simpler building pieces (e.g. as in the ZX calculus)\cite{CD2};
    \item[(iii)] classical systems are explicitly considered as systems within the framework, e.g. allowing for diagrammatic treatment of the quantum-classical interface;
    \item[(iv)] the framework straightforwardly extends to semirings other than the probabilistic semiring $\reals^+$, allowing for discussion of quasi-probabilistic theories (semiring $\reals$), possibilistic theories (boolean semiring) as well as more exotic examples including modal, hyperbolic or $p$-adic quantum theory \cite{gogioso2017}.
\end{enumerate}

\noindent
Despite the presentational differences and the added flexibility, the framework used here contains all causal OPTs as special cases, allowing for all results to be transferred straightforwardly (as long as only normalised and sub-normalised maps are involved).

\section{Controlled processes}

As the first step towards our formulation of semantics for superposition of causal orders, we define a general notion of control of morphisms valid in arbitrary probabilistic theories.
Our definition captures the idea of an agent being able to control the choice of morphisms by suitably encoding and/or decoding classical information about their choice into and/or from a physical system.

\begin{definition}
    Let $\mathcal{C}$ be a probabilistic theory, let $A, B \in \obj{\mathcal{C}}$ be any two systems in the theory.
    Let $(F_x)_{x \in X}$ be a family of processes $F_x: A \rightarrow B$, not necessarily normalised or sub-normalised.
    A \textbf{controlled process} for the family is a triple $(G, p, m)$ consists of a sharp preparation-observation (SPO) pair $(p: X \rightarrow H, \, m: H \rightarrow X)$---where $X$ is a classical system and $H$ is a generic system---together with a process $G: H \otimes A \rightarrow H \otimes B$ satisfying the following equations:
	\begin{equation} \label{eqn:coherentcontrol1}
		 \tikzfig{PhiPicontrolled21}
         \hspace{5mm} = \hspace{5mm}
         \sum_{x \in X}
         \hspace{5mm}
         \tikzfig{PhiPicontrolleddecomposition}
         \hspace{3cm}
		 \tikzfig{PhiPicontrolled2}
         \hspace{5mm} = \hspace{5mm}
         \sum_{x \in X}
         \hspace{5mm}
         \tikzfig{PhiPicontrolleddecomposition2}
	\end{equation}
    Conversationally, we will also say that $(G, p, m)$ is a ``control of'' the family $(F_x)_{x \in X}$.
\end{definition}

\noindent
The system $H$ acts as a physical control system, while the classical system $X$ contains the logical information about the process choice.
The SPO pair is used to encode the logical information into the physical system and/or to decode it from the physical system.
From the point of view of an actor who is using the SPO pair to encode/decode the logical information, the controlled process is no different than classical control.

\begin{example}\label{example:classically-controlled-process}
    The following \textbf{classically controlled process} always exists in every probabilistic theory:
    \begin{equation}
        \sum_{x \in X}
        \hspace{5mm}
        \tikzfig{classicalcontrol}
    \end{equation}
    Conversationally, we will also refer to the above as \emph{the} ``classical control of'' the family $(F_x)_{x \in X}$.
    Note that if $(G, p, m)$ is any control of $(F_x)_{x \in X}$ then the triple $\left((m \otimes \id{B}) \circ G \circ (p \otimes \id{A}), \id{X}, \id{X}\right)$ is always the classical control of the same $(F_x)_{x \in X}$.
\end{example}

\noindent
The definition allows for much more general notions of control, as we shall shortly see, but it also limits the amount of leakage between the input/output systems $A, B$ and the physical control system $H$.
In particular, an agent without access to the output system $B$ cannot, through the SPO pair alone, extract any information about the input state on system $A$ if the maps $F_x$ are normalised:

\begin{equation}
	\sum_{x \in X}
    \hspace{5mm}
    \tikzfig{PhiPicontrolleddecomposition2nokickback}
    \hspace{5mm} = \hspace{5mm}
    \sum_{x \in X}
    \hspace{5mm}
    \tikzfig{identityclassical2a}
    \hspace{5mm} = \hspace{5mm}
    \tikzfig{identityclassical2b}
\end{equation}

\section{Coherent Control in Quantum Theory}
\label{section:coherent-control-quantum}

In quantum information and computing, the idea of coherently controlling a family of unitary processes (and more generally pure CP maps) is certainly not a new one \cite{abbott2018,oi2003,Araujo2014,Thompson2018}.

\begin{example}\label{example:coherently-controlled-isometry}
    If $(F_x)_{x \in X}$ is a family of pure CP maps in quantum theory (e.g. isometries), a generic \textbf{coherently controlled process} for the family uses $\complexs^X$ as a control system and takes the following form:
    \begin{equation} \label{eqn:controlledisometry}
        \tikzfig{controlledisometry}
    \end{equation}
    where $\alpha$ denotes an arbitrary phase in the canonical basis $(\ket{x})_{x \in X}$ for the control system.
    Conversationally, we will also say that the above is a ``coherent control of'' the family $(F_x)_{x \in X}$.
\end{example}

\newcounter{proposition_coherent_control_pure_cp}
\setcounter{proposition_coherent_control_pure_cp}{\value{theorem_c}}
\begin{proposition}\label{proposition:coherent-control-pure-cp}
    Let $(F_x)_{x \in X}$ be a family of pure CP maps in quantum theory, not necessarily normalised (i.e. not necessarily trace-preserving).
    Assume that $(G,p,m)$ is a controlled process for the family with control system $\complexs^X$, where $(p, m)$ is the SPO for the canonical basis $(\ket{x})_{x \in X}$ and where $G$ is itself a pure CP map.
    Then $G$ takes form~\eqref{eqn:controlledisometry} for some phase $\alpha$.
\end{proposition}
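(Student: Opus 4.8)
The plan is to unpack the two defining equations of a controlled process in the special case at hand, using the concrete structure of quantum theory (where pure CP maps are "doubled" linear maps, and the SPO pair for the canonical basis is literally encode-as-$\ket{x}$ / decode-as-$\bra{x}$), and to read off the constraints on $G$ directly. First I would write $G$, which is a pure CP map $\complexs^X \otimes A \to \complexs^X \otimes B$, as the doubling of a linear map $g: \complexs^X \otimes \mathcal{H}_A \to \complexs^X \otimes \mathcal{H}_B$, and likewise each $F_x$ as the doubling of a linear map $f_x: \mathcal{H}_A \to \mathcal{H}_B$. Because the framework's pure-CP structure is faithful on the doubled morphisms (two pure CP maps are equal iff their underlying linear maps agree up to a global phase), I can reduce the diagrammatic equations of \eqref{eqn:coherentcontrol1} to equations about the $f_x$ and $g$ at the level of linear maps, with scalar phase ambiguities tracked carefully.

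Next I would decompose $g$ in block form along the canonical basis of the control system: $g = \sum_{x,y} \ket{y}\bra{x} \otimes g_{xy}$ with each $g_{xy}: \mathcal{H}_A \to \mathcal{H}_B$. The first equation of \eqref{eqn:coherentcontrol1} — plugging $\ket{x}$ into the control input and reading $\bra{x}$ off the control output, i.e. extracting the $(x,x)$ block — forces $g_{xx} = \beta_x f_x$ for some phases $\beta_x$ (a global phase per $x$, all that survives the passage from pure CP maps back to linear maps). The second equation of \eqref{eqn:coherentcontrol1}, which is the "no kickback / no leakage" condition coming from feeding in $\ket{x}$ on the control but \emph{not} post-selecting, together with sharpness of the SPO pair, is what kills the off-diagonal blocks: it says that for each input control basis state $\ket{x}$ the output control state must again be (proportional to) $\ket{x}$ regardless of the state on $A$, hence $g_{xy} = 0$ for $y \neq x$. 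Assembling these, $g = \sum_x \beta_x\, \ket{x}\bra{x} \otimes f_x$, and re-doubling and absorbing the phases $\beta_x$ into the single arbitrary phase $\alpha$ of \eqref{eqn:controlledisometry} gives exactly form \eqref{eqn:controlledisometry}.

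The main obstacle I anticipate is the careful bookkeeping of phases and the precise meaning of "sharp" for the SPO pair: the doubling functor forgets global phases, so each equation at the CP level only constrains the underlying linear maps up to a scalar, and I must check that these local scalar freedoms are consistent (e.g. that the phase relating $g_{xx}$ to $f_x$ is well-defined and collects into the claimed $\alpha$) rather than over- or under-constrained. The second subtle point is making rigorous the step "the off-diagonal blocks vanish": this uses that the preparation $p$ and observation $m$ form a \emph{sharp} PO pair for the canonical basis — so that $m \circ p = \id{X}$ and, crucially, that the only way the control output can pass the $m$-test for \emph{every} $A$-input is to have already collapsed onto the basis — and I would want to phrase this as a statement about the normalisation/causality structure (discarding map) of the doubled category rather than by an ad hoc matrix argument. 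Everything else is routine diagrammatic manipulation within the framework of Appendix~\ref{appendix:cpt}.
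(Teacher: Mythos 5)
Your proposal is correct and follows essentially the same route as the paper's proof: both use purity to turn the CP-level control equations into equalities of underlying linear maps up to a global phase for each control value $x$, determine the action of $G$ against each basis vector $\ket{x}$, and reassemble via the resolution of the identity $\sum_{x}\ket{x}\bra{x}$, collecting the per-$x$ phases into the phase gate of~\eqref{eqn:controlledisometry}. The only cosmetic difference is that the paper compares the slices $(\bra{x}\otimes\id{B})\circ G$ directly against the explicit coherent control of Example~\ref{example:coherently-controlled-isometry}, which delivers both your diagonal identification $g_{xx}=\beta_x f_x$ and the vanishing of the off-diagonal blocks $g_{xy}$ ($y\neq x$) in a single step, rather than distributing those two facts over the two defining equations as you do.
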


\noindent
Moving away from pure maps, the question arises what the ``coherent'' control for arbitrary CP maps should be.
We adopt the following definition within our framework.

\begin{definition}\label{definition:coherent-control-cp}
    Let $(F_x)_{x \in X}$ be a family of CP maps in quantum theory, not necessarily normalised (i.e. not necessarily trace-preserving).
    A \textbf{coherently controlled process} for the family is a controlled process $(G, p, m)$ for the family such that $G$ is obtained as $\Tr_E(G')$ for some pure CP map $G': H \otimes A \rightarrow H \otimes B \otimes E$.
    Conversationally, we will also say that the above is a ``coherent control of'' the family $(F_x)_{x \in X}$.
\end{definition}

\noindent
If each map $F_x$ in a family $(F_x)_{x \in X}$ of CP maps comes with a chosen purification $F_x = \Tr_E(\hat{F}_x)$---without loss of generality, using the same environment system $E$ for all purifications---then it is easy to construct a coherent control of the family, by taking $(G', p, m)$ to be the coherent control of the family $(\hat{F}_x)_{x \in X}$ of purifications and then discarding the environment $E$ of $G'$.
However, it would be desirable for such a construction to be a function of the family $(F_x)_{x \in X}$ alone, without dependence on additional information.
As shown by Proposition~\ref{proposition:coherent-control-pure-cp}, this is indeed possible when all CP maps in the family are pure.
However, the following no-go result shows this to no longer be the case when generic CP maps are considered.

\newcounter{proposition_coherent_control0cp_nogo}
\setcounter{proposition_coherent_control0cp_nogo}{\value{theorem_c}}
\newcounter{eq_coherent_control0cp_nogo}
\setcounter{eq_coherent_control0cp_nogo}{\value{equation}}
\begin{proposition}\label{proposition:coherent-control-cp-nogo}
    Let $(F_x)_{x \in X}$ be a family of CP maps in quantum theory.
    It is not generally possible to construct a coherent control of the family which is a function of the family $(F_x)_{x \in X}$ alone, i.e. one which is independent of a choice of purification for the CP maps in the family.
    This is the same as the statement that it is not generally possible to construct a coherent control $(\Tr_E(G'), p, m)$ of the family $(F_x)_{x \in X}$ in such a way that the following equation holds for all choices of unitaries $U_x: E \rightarrow E$:
    \begin{equation}\label{eq:coherent-control-cp-nogo}
        \tikzfig{coherentcontrolnogo1}
        \hspace{5mm} = \hspace{5mm}
        \tikzfig{coherentcontrolnogo2}
    \end{equation}
    Note that the unitaries $(U_x)_{x \in X}$ correspond to all possible choices of purification for the CP maps $(F_x)_{x \in X}$.
\end{proposition}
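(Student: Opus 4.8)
The plan is to prove the no-go statement by exhibiting a single family for which \emph{no} coherent control can satisfy~\eqref{eq:coherent-control-cp-nogo}, using the reformulation already recorded in the Proposition: a construction producing a coherent control $(\Tr_E(G'),p,m)$ and validating~\eqref{eq:coherent-control-cp-nogo} for every family of unitaries $(U_x)_{x\in X}\colon E\to E$ is exactly a construction that is insensitive to the replacement of a chosen purification $\hat F_x$ by the rotated purification $(\id{B}\otimes U_x)\circ\hat F_x$. So it suffices to pin down a family $(F_x)_{x\in X}$ together with unitaries $(U_x)_{x\in X}$ for which this insensitivity demonstrably fails, no matter which valid $G'$ is chosen.

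First I would rigidify the shape of any admissible $G'$. If $\Tr_E(G')$ is a coherent control of $(F_x)_{x\in X}$ with control system $\complexs^X$ and canonical-basis SPO, then by equations~\eqref{eqn:coherentcontrol1} (as in the last remark of Example~\ref{example:classically-controlled-process}) the $\ket x\bra x$-block of $\Tr_E(G')$ is exactly $F_x$; compressing the pure CP map $G'$ by $\ket x\bra x$ on the control system (in and out) therefore yields a pure CP map $A\to B\otimes E$ with partial trace $F_x$ over $E$, i.e. a map of the form $\rho\mapsto\hat V_x\rho\hat V_x^\dagger$ for some purifying $\hat V_x\colon A\to B\otimes E$. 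These compressions are precisely the maps recovered through the SPO, so $(G',p,m)$ is itself a controlled process for the family of pure CP maps $\big(\hat V_x(-)\hat V_x^\dagger\big)_{x\in X}$, with $G'$ pure CP and canonical-basis SPO; by Proposition~\ref{proposition:coherent-control-pure-cp} it must then take the form~\eqref{eqn:controlledisometry}, namely $G'=\tilde V(-)\tilde V^\dagger$ with $\tilde V=\sum_{x\in X}\alpha_x\,\ket x\bra x\otimes\hat V_x$ for some phases $\alpha_x$. Consequently, on an input with control component $\ket x\bra y$ and $A$-component $\rho$, the output block of $\Tr_E(G')$ is $\alpha_x\overline{\alpha_y}\,\Tr_E(\hat V_x\rho\hat V_y^\dagger)$, while the $(U_x)$-rotated construction replaces it by $\alpha_x\overline{\alpha_y}\,\Tr_E\!\big((\id{B}\otimes U_x)\hat V_x\rho\hat V_y^\dagger(\id{B}\otimes U_y)^\dagger\big)$, the phases occurring identically on both sides.

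Next I would extract the obstruction. Expanding $\hat V_x\rho\hat V_y^\dagger$ over a basis of $E$, a short computation gives $\Tr_E\!\big((\id{B}\otimes U_x)\hat V_x\rho\hat V_y^\dagger(\id{B}\otimes U_y)^\dagger\big)=\Tr_E\!\big(\hat V_x\rho\hat V_y^\dagger\,(\id{B}\otimes U_y^\dagger U_x)\big)$, which in general differs from $\Tr_E(\hat V_x\rho\hat V_y^\dagger)$ unless $U_y^\dagger U_x=\id{E}$, provided some $\hat V_x$ genuinely involves the environment. Concretely I would take $X=\{0,1\}$, $A=\complexs^2$, $B=\complexs$, and the constant family $F_0=F_1=\Tr\colon\complexs^2\to\complexs$ (the qubit discarding map, of Kraus rank $2$), whose isometric purification is $\hat V\colon\ket i\mapsto\ket i$ with $E=\complexs^2$, forcing $\hat V_0=\hat V_1=\hat V$ above. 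Choosing $U_0=\id{E}$ and $U_1=\ket0\bra0-\ket1\bra1$, the $\ket0\bra1$-block of $\Tr_E(G')$ changes from $\rho\mapsto\Tr\rho=\bra0\rho\ket0+\bra1\rho\ket1$ to $\rho\mapsto\bra0\rho\ket0-\bra1\rho\ket1$; these disagree on $\rho=\ket1\bra1$, e.g. when the controlled process is fed the input $\ket{+}\bra{+}\otimes\ket1\bra1$ on $\complexs^X\otimes A$. Since the argument constrained \emph{every} admissible $G'$, equation~\eqref{eq:coherent-control-cp-nogo} cannot hold for this family, which proves the Proposition.

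The step I expect to be the main obstacle is the first one: faithfully reading off the diagrammatic definitions that any $G'$ with $\Tr_E(G')$ a coherent control is a controlled process for \emph{some} family of purifications of the $F_x$ (so that Proposition~\ref{proposition:coherent-control-pure-cp} can be applied to rigidify its form), and verifying that the $X$-controlled insertion of $(U_x)_{x\in X}$ appearing in~\eqref{eq:coherent-control-cp-nogo} really implements the purification change $\hat V_x\mapsto(\id{B}\otimes U_x)\hat V_x$ at the level of these compressed blocks. Once this bookkeeping is settled, the remaining work — the basis expansion and the explicit qubit counterexample — is routine.
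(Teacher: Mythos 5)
Your proposal is correct, and its skeleton coincides with the paper's own argument: both first rigidify any admissible pure $G'$ into the form $\tilde V(-)\tilde V^\dagger$ with $\tilde V=\sum_{x\in X}\alpha_x\ket{x}\bra{x}\otimes\hat V_x$ for some purifications $\hat V_x$ of the $F_x$, and then observe that invariance under change of purification is exactly Equation~\eqref{eq:coherent-control-cp-nogo} ranging over all $(U_x)_{x\in X}$. The paper reaches the rigidified form by comparing $G'$ post-selected on $\bra{x}$ against the coherent control of a chosen purification family and invoking essential uniqueness of purification (picking up unitaries $V_x$ and phases $\varphi_x$), whereas you reach it by compressing $G'$ with $\ket{x}\bra{x}$ and then applying Proposition~\ref{proposition:coherent-control-pure-cp}; these routes are interchangeable, though yours needs the small positivity lemma you yourself flag --- namely that $\Tr_E\big(W_x\rho W_x^\dagger\big)=\ket{x}\bra{x}\otimes F_x(\rho)$ forces $W_x=\ket{x}\otimes\hat V_x$, so that $(G',p,m)$ really is a controlled process for the purification family before Proposition~\ref{proposition:coherent-control-pure-cp} can be invoked. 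Where you genuinely go beyond the paper is the final step: the paper ends with the bare assertion that ``for a general such $G'$ this equation cannot always be made to hold'', while you supply an explicit witness ($F_0=F_1=\Tr$ on a qubit, $U_0=\id{E}$, $U_1=\ket{0}\bra{0}-\ket{1}\bra{1}$) and compute that the off-diagonal control block $\alpha_0\overline{\alpha_1}\,\Tr_E\big(\hat V_0\rho\hat V_1^\dagger\big)$ changes under insertion of $U_1^\dagger U_0$. One small correction there: essential uniqueness only determines $\hat V_0,\hat V_1$ up to unitaries $S_0,S_1$ on $E$, so they are not ``forced'' to equal the canonical isometry; but your conclusion survives, since the block becomes $\Tr\big(S_1^\dagger U_1^\dagger U_0 S_0\rho\big)$ versus $\Tr\big(S_1^\dagger S_0\rho\big)$, and these differ for some $\rho$ whenever $U_1^\dagger U_0\neq\id{E}$. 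In short, the proof is sound and is in fact more complete than the one given in the paper.
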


\noindent
The classical control of arbitrary families of CP maps is trivially possible, as shown in Example~\ref{example:classically-controlled-process}.
The formulation of Proposition~\ref{proposition:coherent-control-cp-nogo} shows that, on the other hand, the question of coherently controlling families of CP maps is much more sophisticated, leading to some confusion in the literature about its feasibility.

In \cite{oi2003}, for example, the author interprets the failure to construct such a coherent control independently of the choice of purification (aka choice of Kraus operators) as a sign that an interferometric realisation of such coherent control would extract information about the underlying physical implementation of the CP maps themselves.
We believe that this statement can be easily misinterpreted: the CP maps involved in the experiment are already the ``physical'' ones---defined on the direct sum of the vacuum sector and the 1-particle sector---and the results of the experiment are independent of the choice of purification for them.
This is obvious, since the experiment itself can be easily written as a circuit.

What the results of the experiment actually depend on is the choice of purification for the ``logical'' CP maps involved, those restricted to the 1-particle sector.
This is due to the specific design of the experiment: the implementation of the ``physical'' CP maps is such that they react to the vacuum state on their input by emitting a non-vacuum state $\ket{e}$ on the environment $E$.
This is not physically unreasonable, e.g. if the environment system comprises some static massive particle which is made to interact with the photons passing in the interferometric setup.
However, this dependence on the choice of purification for the ``logical'' CP maps goes away as soon as we allow the environment ``rest'' state $\ket{e}$ to be transformed covariantly with the choice of purification, i.e. if we set $\ket{e} \mapsto U \ket{e}$ whenever we change the purification by applying a unitary $U: E \rightarrow E$ to the environment.

It is this last observation which helps us frame the discussion by \cite{oi2003} within the context of Proposition~\ref{proposition:coherent-control-cp-nogo}: given the two CP maps, acting on the vacuum and non-vacuum sectors respectively, it is very much possible to find a coherent control which is invariant under application of the same unitary $U$ to the environment of both purifications.
This is always the case: if all $U_x: E \rightarrow E$ are chosen to be equal to some fixed $U$, then Equation~\eqref{eq:coherent-control-cp-nogo} always holds (because $U$ is trace-preserving).
What is found to be impossible in the discussion by \cite{oi2003} is to choose such coherent control in a way which is invariant under application of $U$ to the environment of the non-vacuum sector map and of the identity to the environment of the vacuum sector map.
This issue indeed generalises and formed the inspiration for our proof of Proposition~\ref{proposition:coherent-control-cp-nogo}.

\section{Definite and Indefinite Causal Scenarios}

When operational scenarios with definite causal order are depicted diagrammatically in the context of probabilistic theories, it is easy to conflate the boxes in the diagrams with processes happening locally at events (i.e. points in spacetime), and the wires in the diagrams with the information flow establishing the causal relationships between said events.
It has been previously argued \cite{Pinzani2019} that this practice---though natural and notationally pleasant---is not mathematically well-founded, as there need not be a canonical way to decide how a process should be decomposed into a diagram compatibly with a given definite causal structure.
As a consequence, two ingredients are needed when talking about such operational scenarios:

\begin{enumerate}
    \item[(i)] a causal graph, representing the events in the scenario and their definite causal order;
    \item[(ii)] a map assigning each event in the scenario to the process happening at that event.
\end{enumerate}

\noindent
The mathematical structure introduced in \cite{Pinzani2019} as the substrate for such operational scenarios is that of \emph{framed causal graphs}.
For reasons which will become clear later on, we generalise the original definition to include the possibility of multiple edges between the same pairs of event.
Furthermore, we include some additional information about the classical interface of the local processes/experiments, in the form of finite sets of input values that can be used to control them and output values for their outcomes.

\begin{definition}\label{definition:framed-multigraph}
    A \textbf{framed multigraph} is an directed multigraph $\Gamma$
    \footnote{A directed multigraph $\Gamma$ consists of a set $\nodes{\Gamma}$, a set of $\edges{\Gamma}$ and a pair of functions $\tailSym: \edges{\Gamma} \rightarrow \nodes{\Gamma}$ and $\headSym: \edges{\Gamma} \rightarrow \nodes{\Gamma}$ specifying the tail and head of each edge respectively.}
    equipped with the following data:
    \begin{itemize}
    	\item a sub-set $\inputs{\Gamma} \subseteq \nodes{\Gamma}$ of the nodes of $\Gamma$---the \textbf{input nodes}---such that each $x \in \inputs{\Gamma}$ has zero incoming edges and a single outgoing edge;
    	\item a sub-set $\outputs{\Gamma} \subseteq \nodes{\Gamma}$ of the nodes of $\Gamma$---the \textbf{output nodes}---such that each $x \in \outputs{\Gamma}$ has zero outgoing edges and a single incoming edge;
    	\item a \textbf{framing} for $\Gamma$, which consists of the following:
    	\begin{itemize}
    		\item a total order on $\inputs{\Gamma}$;
    		\item a total order on $\outputs{\Gamma}$;
    		\item for each node $x \in \nodes{\Gamma}$, a total order on the edges outgoing from $x$;
    		\item for each node $x \in \nodes{\Gamma}$, a total order on the edges incoming to $x$;
    	\end{itemize}
    \end{itemize}
    We refer to nodes in $\inputs{\Gamma}$ or in $\outputs{\Gamma}$ as \textbf{boundary nodes} and to all other nodes in $\Gamma$ as \textbf{internal nodes}.
    An \textbf{acyclic framed multigraph} is a framed multigraph which is acyclic (and in particular has no loops).
\end{definition}
\begin{remark}
    The input and output nodes of a framed multigraph are designed to behave as ``half-edges'': when two framed multigraphs $\Gamma$ and $\Gamma'$ are composed sequentially, the outputs of $\Gamma$ and the inputs of $\Gamma'$ are joined and disappear, each pair of corresponding output/input resulting in a single edge of the composite framed multigraph $\Gamma' \circ \Gamma$.
    (We do not use such composition here.)
\end{remark}

\begin{definition}\label{definition:definite-causal-scenario}
    A \textbf{definite causal scenario} is a triple $\Theta = (\Gamma, \underline{I}, \underline{O})$ of an acyclic framed multigraph $\Gamma$ with:
    \begin{itemize}
        \item a finite set $I_\omega$ of \textbf{classical inputs} for each $\omega \in \events{\Theta}$, i.e. the values available locally to control the process at the event;
        \item a finite set $O_\omega$ of \textbf{classical outputs} for each $\omega \in \events{\Theta}$, i.e. the values that the process at the event can return locally as its outcome.
    \end{itemize}
    In the above, we have defined the \textbf{events} in the scenario as the set $\events{\Theta} := \nodes{\Gamma} \backslash (\inputs{\Gamma} \sqcup \outputs{\Gamma})$ of internal nodes for $\Gamma$.
\end{definition}

\begin{remark}
    Compared to the original \cite{Pinzani2019}, we have restricted our attention to \emph{chronology respecting} scenarios, i.e. those corresponding to acyclic framed multigraphs.
    However, one could easily extend Definition~\ref{definition:definite-causal-scenario} to one for \emph{chronology violating} scenarios, by allowing the framed multigraph to be cyclic and/or to have loops.
\end{remark}

\noindent
We now define exactly what it means to ``draw a diagram over'' one such definite causal scenario, with semantics valid in any probabilistic theory.

\begin{definition}\label{definition:diagram-over-definite-causal-scenario}
    Let $\Theta = (\Gamma, \underline{I}, \underline{O})$ be a definite causal scenario and let $\mathcal{C}$ be a probabilistic theory.
    A \textbf{diagram over $\Theta$ in $\mathcal{C}$} is a pair of functions $\sysmapSym: \edges{\Gamma} \rightarrow \obj{\mathcal{C}}$ and $\procmapSym: \events{\Gamma} \rightarrow \mor{\mathcal{C}}$, associating each $e \in \edges{\Gamma}$ to a system $\sysmap{e}$ in $\mathcal{C}$ and each $\omega \in \events{\Gamma}$ to a process $\procmap{\omega}$ in $\mathcal{C}$ with the following type:
    \begin{equation}
        \procmap{\omega}:
        I_\omega
        \otimes
        \bigotimes_{e \in \inputs{\omega}} \sysmap{e}
        \longrightarrow
        O_\omega
        \otimes
        \hspace{-2mm}\bigotimes_{e' \in \outputs{\omega}}\hspace{-2mm} \sysmap{e'}
    \end{equation}
    Above we denoted by $\inputs{\omega} := \suchthat{e \in \edges{\Gamma}}{\head{e} = \omega}$ the edges of $\Gamma$ coming into $\omega$ and we similarly denoted by  $\outputs{\omega} := \suchthat{e \in \edges{\Gamma}}{\tail{e} = \omega}$ the edges of $\Gamma$ going out of $\omega$.
\end{definition}

\noindent
Even though it specifies concrete processes in a probabilistic theory, the definition of diagram $\Delta$ above is still partly abstract, as it does not explicitly state how the processes fit together.
What gives it fully concrete semantics is the following definition of the overall process $\diagramproc{\Delta}$ \emph{associated} to the diagram $\Delta$.
See Figure~\ref{fig:definite-diagram} for an exemplification.

\begin{figure}[h]
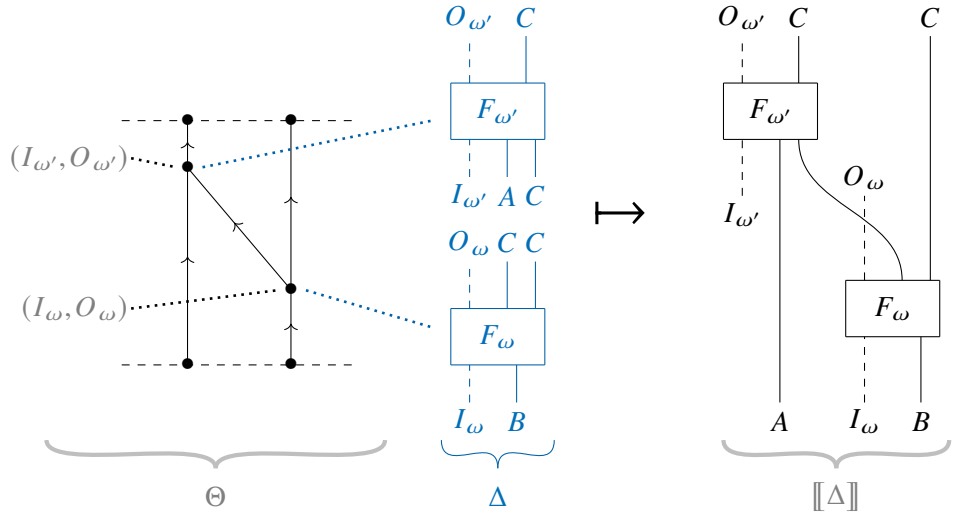

    \begin{center}
        \tikzfig{Figure1}
    \end{center}
    \caption{Graphical exemplification of the association of the process $\diagramproc{\Delta}$ to a diagram $\Delta$ over a definite causal scenario $\Theta$.}
    \label{fig:definite-diagram}
\end{figure}

\begin{definition}\label{definition:process-diagram-over-causal-graph}
    Let $\Theta = (\Gamma, \underline{I}, \underline{O})$ be a definite causal scenario, $\mathcal{C}$ be a probabilistic theory and $\Delta = (\sysmapSym, \procmapSym)$ be a diagram over $\Gamma$ in $\mathcal{C}$.
    The \textbf{process associated to $\Delta$} is the unique process $\diagramproc{\Delta}$ in $\mathcal{C}$ obtained by joining the outputs and inputs of the processes $\procmap{\omega}$ in the diagram $\Delta$ according to the directed multigraph $\Gamma$, resulting in a process with the following overall type:
    \begin{equation}
        \diagramproc{\Delta}:
        \left(
        \left(
        \bigotimes_{\omega \in \events{\Gamma}}
        I_\omega
        \right)
        \otimes
        \left(
        \bigotimes_{e \in \inputs{\Gamma}} \sysmap{e}
        \right)
        \right)
        \longrightarrow
        \left(
        \left(
        \bigotimes_{\omega \in \events{\Gamma}}
        O_\omega
        \right)
        \otimes
        \left(
        \bigotimes_{e' \in \outputs{\Gamma}} \sysmap{e'}
        \right)
        \right)
    \end{equation}
    Each classical input system $I_\omega$ of $\diagramproc{\Delta}$ is wired to the classical input system $I_\omega$ of the process $\procmap{\omega}$.
    Similarly, each classical output system $O_\omega$ of $\diagramproc{\Delta}$ is wired to the classical output system $O_\omega$ of $\procmap{\omega}$.
\end{definition}

Definite causal scenarios and diagrams over them are perfectly adequate when it comes to discussion of operational scenarios over definite causal orders, but they don't have the necessary flexibility to accommodate operational scenarios where causality is indefinite.
The main contribution of this work will now be to define diagrams over \emph{indefinite causal scenarios}, giving them semantics in probabilistic theories using controlled processes.
As a special case, we will be able to describe the idea of superposition of causal orders in quantum theory, i.e. we give solid mathematical foundations over which to discuss the possibility of operational scenarios where local quantum experiments are connected by a superposition of background spacetimes.

Firstly, we define the purely operational canvas against which the indefinite causal scenario takes places.
This includes all the black-box information locally available to the actors in our scenario, but does not include any information about causal order nor any information about the specific implementation of the local processes.

\begin{definition}\label{definition:causal-scenario}
    An \textbf{indefinite causal scenario} $\Phi$ is specified by the following data:
    \begin{itemize}
        \item the set $\Omega$ of \textbf{events} at which the processes (operations, experiments, etc.) take place;
        \item a set $\Xi$ of \textbf{system labels}, used to abstractly indicate which physical systems are guaranteed to be the same across different experiments;
        \item a finite set $I_\omega$ of \textbf{classical inputs} for each $\omega \in \Omega$, i.e. the values available locally to control the process at the event;
        \item a finite set $O_\omega$ of \textbf{classical outputs} for each $\omega \in \Omega$, i.e. the values that the process at the event can return locally as its outcome;
        \item a finite sequence $\Sigma^{in}_\omega$ of elements of $\Xi$ for each $\omega \in \Omega$, the system labels for the physical systems coming into the process from outside the event;
        \item a finite sequence $\Sigma^{out}_\omega$ of elements of $\Xi$ for each $\omega \in \Omega$, the system labels for the physical systems coming out of the process and leaving the event;
        \item a finite sequence $\Pi^{in}$ of elements of $\Xi$, the system labels for the physical systems coming in from outside the region where the scenario is taking place;
        \item a finite sequence $\Pi^{out}$ of elements of $\Xi$, the system labels for the physical systems going out from the region where the scenario is taking place.
    \end{itemize}
    Formally, the scenario is the tuple $\Phi = (\Omega, \Xi, \underline{I}, \underline{O}, \underline{\Sigma}^{in}, \underline{\Sigma}^{out}, \Pi^{in}, \Pi^{out})$, where the underlined letters indicate $\Omega$-indexed families (e.g. $\underline{I} := \omega \mapsto I_\omega$).
\end{definition}

\begin{remark}\label{remark:definite-causal-scenarios-special-case}
    Even though the definitions involved are formally different, the definite causal scenarios of Definition~\ref{definition:definite-causal-scenario} arise naturally as a special case of the indefinite causal scenarios from Definition~\ref{definition:causal-scenario}.
    Indeed, consider an indefinite causal scenario $\Phi = (\Omega, \Xi, \underline{I}, \underline{O}, \underline{\Sigma}^{in}, \underline{\Sigma}^{out}, \Pi^{in}, \Pi^{out})$ and assume that each symbol $e \in \Xi$ appears exactly twice as follows:
    \begin{itemize}
        \item it appears once either in an output set $\Sigma^{out}_{\tail{e}}$ for some $\tail{e} \in \Omega$ or otherwise at some place $\tail{e} \in \{1, ..., |\Pi^{in}|\}$ in the sequence $\Pi^{in}$;
        \item it appears once either in an input set $\Sigma^{in}_{\head{e}}$ for some $\head{e} \in \Omega$ or otherwise at some place $\head{e} \in \{1, ..., |\Pi^{out}|\}$ in the sequence $\Pi^{out}$.
    \end{itemize}
    This defines a framed multigraph $\Gamma$ with $\nodes{\Gamma} := \Omega \sqcup \{1, ..., |\Pi^{in}|\} \sqcup \{1, ..., |\Pi^{out}|\}$ and $\edges{\Gamma} := \Xi$.
    The internal nodes of $\Gamma$ are the events in $\Omega$, yielding a definite causal scenario $(\Gamma, \underline{I}, \underline{O})$. Conversely, each definite causal scenario $(\Gamma, \underline{I}, \underline{O})$ can be turned into an indefinite causal scenario by taking $\Xi := \edges{\Gamma}$ and defining the sequences $\Sigma^{in}_\omega$, $\Sigma^{out}_\omega$, $\Pi^{in}$ and $\Pi^{out}$ from $\inputs{\omega}$, $\outputs{\omega}$, $\inputs{\Gamma}$ and $\outputs{\Gamma}$ respectively.
\end{remark}

\noindent
Secondly, we define the set of definite causal scenarios which are \emph{compatible} with a given indefinite causal scenario: they correspond exactly to all possible ways of joining the output and input physical systems into a multigraph in such a way as to respect the system labels for the indefinite causal scenario.
Each indefinite causal scenario gives rise to a set of compatible definite causal scenarios, each definite causal scenario equipped with a labelling associating each edge of the multigraph to the corresponding system label from the indefinite causal scenario.
See Figure~\ref{fig:definite-assoc} for an exemplification.

\begin{figure}[h]
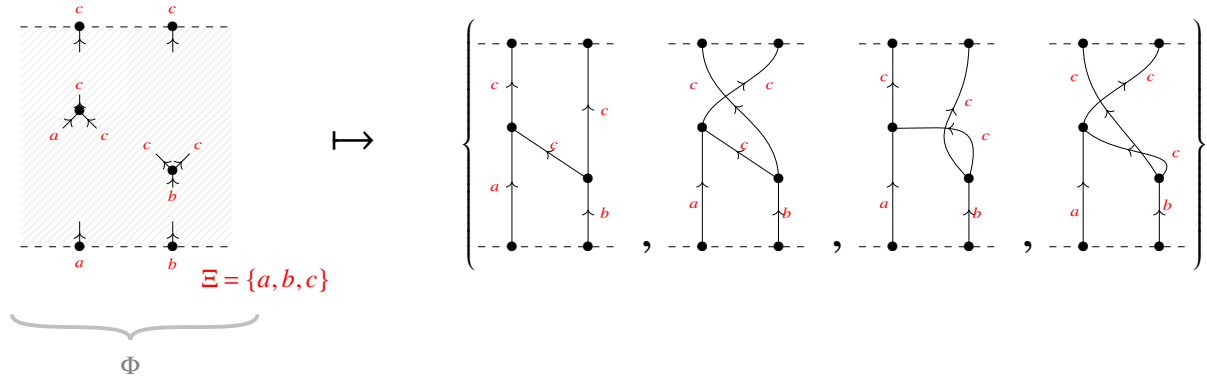

    \begin{center}
    \scalebox{0.9}{$
        \tikzfig{Figure2a}
        \hspace{0mm}
        \mathlarger{\mathlarger{\mathlarger{\mathlarger{\mapsto}}}}
        \hspace{12mm}
        \left\{ \tikzfig{graph1}
        \hspace{2mm}
        \tikzfig{graph2}
        \hspace{2mm}
        \tikzfig{graph3}
        \hspace{2mm}
        \tikzfig{graph4}
        \right\}
    $}
    \end{center}
    \caption{Graphical exemplification of the association of definite causal scenarios to an indefinite causal scenario $\Phi$.}
    \label{fig:definite-assoc}
\end{figure}

\begin{definition}\label{definition:definite-compatibl}
    Let  $\Phi = (\Omega, \Xi, \underline{I}, \underline{O}, \underline{\Sigma}^{in}, \underline{\Sigma}^{out}, \Pi^{in}, \Pi^{out})$ be an indefinite causal scenario.
    A definite causal scenario $\Theta = (\Gamma, \underline{I}', \underline{O}')$ is \textbf{compatible} with $\Phi$ if the following conditions hold:
    \begin{enumerate}
        \item[(i)] the events of $\Theta$ (internal nodes of $\Gamma$) are the events of the scenario, i.e. we have $\events{\Theta} = \Omega$;
        \item[(ii)] we have $\underline{I}' = \underline{I}$ and $\underline{O}' = \underline{O}$;
        \item[(iii)] for each $\omega \in \Omega$, $\inputs{\omega}$ and $\Sigma^{in}_\omega$ have the same number of elements; write $\Sigma^{in}_\omega(e)$ for the system label in the totally ordered set $\Sigma^{in}_\omega$ at the same position as edge $e$ in the totally ordered set $\inputs{\omega}$;
        \item[(iv)] for each $\omega \in \Omega$, $\outputs{\omega}$ and $\Sigma^{out}_\omega$ have the same number of elements; write $\Sigma^{out}_\omega(e)$ for the system label in the totally ordered set $\Sigma^{out}_\omega$ at the same position as edge $e$ in the totally ordered set $\outputs{\omega}$;
        \item[(v)] the input nodes $\inputs{\Gamma}$ and $\Pi^{in}$ have the same number of elements; write $\Pi^{in}(e)$ for the system label in the totally ordered set $\Pi^{in}$ at the same position as $\tail{e}$ in the totally ordered set $\inputs{\Gamma}$;
        \item[(vi)] the output nodes $\outputs{\Gamma}$ and $\Pi^{out}$ have the same number of elements; write $\Pi^{out}(e)$ for the system label in the totally ordered set $\Pi^{out}$ at the same position as $\head{e}$ in the totally ordered set $\outputs{\Gamma}$;
        \item[(vii)] for each edge $e \in \edges{\Gamma}$, the system label at its tail and at its head coincide, i.e. we have $\Sigma^{out}_{\tail{e}}(e) = \Sigma^{in}_{\head{e}}(e)$; by convention, we set $\Sigma^{out}_{\tail{e}}(e) := \Pi^{in}(e)$ when $\tail{e}$ is an input node and $\Sigma^{in}_{\head{e}}(e) := \Pi^{out}(e)$ when $\head{e}$ is an output node (both can be true at the same time).
    \end{enumerate}
    A definite causal scenario $\Theta$ which is compatible with the indefinite causal scenario $\Phi$ comes equipped with an edge labelling $\syslabelSym{\Phi}{\Theta}: \edges{\Gamma} \rightarrow \Xi$, sending each edge $e \in \edges{\Gamma}$ to the system label $\syslabel{\Phi}{\Theta}{e} \in \Xi$ which the indefinite causal scenario associates to the endpoints of the edge.
    We write $\definiteCausAssoc{\Phi}$ for the set of definite causal scenarios compatible with $\Phi$.
\end{definition}

Finally, we are in a position to define what it means to draw a diagram over an \emph{indefinite} causal scenario $\Phi$.
This is a generalisation of the abstract notion of drawing a diagram over a definite causal scenario from Definition~\ref{definition:diagram-over-definite-causal-scenario}: processes are associated to the events, but now taking additional care that the induced diagrams over all \emph{definite} scenarios compatible with $\Phi$ are well-defined.
See Figure~\ref{fig:indefinite-diagram} for an exemplification.

\begin{figure}[h]
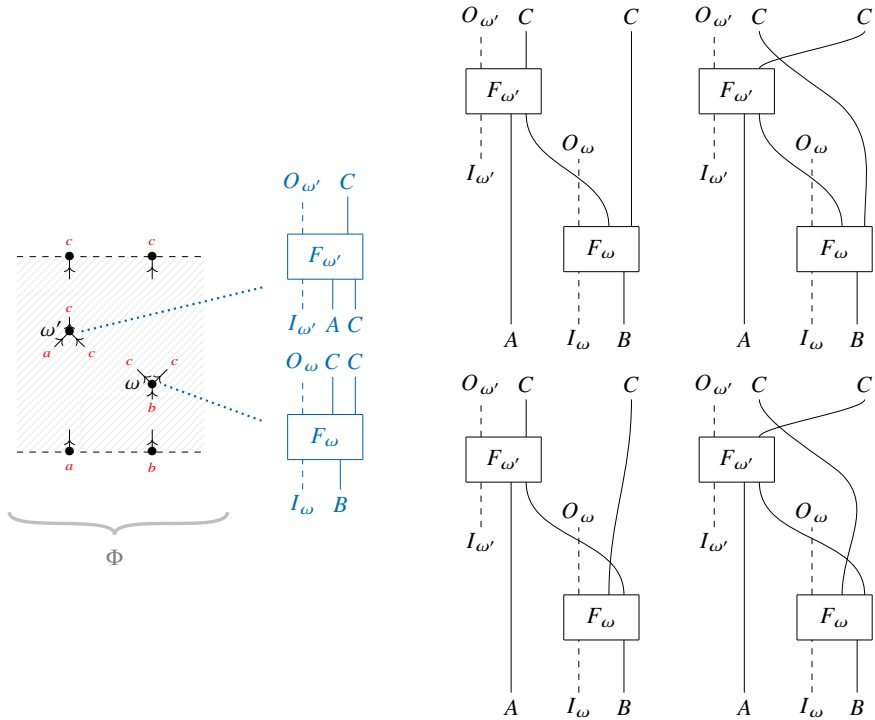

    \begin{center}
    \scalebox{0.8}{$
        \tikzfig{Figure3a}
        \hspace{15mm}
        \tikzfig{Figure3b}
    $}
    \end{center}
    \caption{Graphical exemplification of how a diagram over an indefinite causal scenario $\Phi$ gives rise to diagrams over the compatible definite causal scenarios (cf. Figure~\ref{fig:definite-assoc}).}
    \label{fig:indefinite-diagram}
\end{figure}

\begin{definition}\label{definition:diagram-over-indefinite-causal-scenario}
    Let $\Phi = (\Omega, \Xi, \underline{I}, \underline{O}, \underline{\Sigma}^{in}, \underline{\Sigma}^{out}, \Pi^{in}, \Pi^{out})$ be an indefinite causal scenario and let $\mathcal{C}$ be a probabilistic theory.
    A \textbf{diagram over $\Phi$ in $\mathcal{C}$} is a pair of functions $\sysmapSym: \Xi \rightarrow \obj{\mathcal{C}}$ and $\procmapSym: \Omega \rightarrow \mor{\mathcal{C}}$, associating each system label $\xi$ to a system $\sysmap{\xi}$ in $\mathcal{C}$ and each event $\omega \in \Omega$ to a process $\procmap{\omega}$ in $\mathcal{C}$ with the following type:
    \begin{equation}
        \procmap{\omega}:
        I_\omega
        \otimes
        \bigotimes_{\xi \in \Sigma^{in}_{\omega}} \sysmap{\xi}
        \longrightarrow
        O_\omega
        \otimes
        \hspace{-2mm} \bigotimes_{\xi' \in \Sigma^{out}_{\omega}} \hspace{-2mm}\sysmap{\xi'}
    \end{equation}
    If $\Delta$ is such a diagram and $\Theta = (\Gamma,\underline{I}, \underline{O})$ is a definite causal scenario compatible with $\Phi$, then the \textbf{induced diagram} $\restrict{\Delta}{\Theta} = (\restrict{\sysmapSym}{\Theta}, \restrict{\procmapSym}{\Theta})$ over $\Theta$ is defined by setting $\restrict{\sysmapSym}{\Theta}: \edges{\Gamma} \rightarrow \obj{\mathcal{C}}$ to be $\restrict{\sysmapSym}{\Theta}\left(e\right) := \sysmap{\syslabel{\Phi}{\Theta}{e}}$ and setting $\restrict{\procmapSym}{\Theta}: \events{\Theta} \rightarrow \mor{\mathcal{C}}$ to be $\restrict{\procmapSym}{\Theta}\left(\omega\right) := \procmap{\omega}$.
\end{definition}

\noindent
The semantics for a diagram over an indefinite causal scenario $\Phi$ will no longer be given by a single process with a definite causal order---as was the case for the semantics of diagrams over definite scenarios---but rather a controlled process for the family of all diagrams over all definite scenarios compatible with $\Phi$.

\begin{definition}\label{definition:controlled-process-diagram-over-indefinite-scenario}
    Let $\Phi$ be an indefinite causal scenario, $\mathcal{C}$ be a probabilistic theory and $\Delta = (\sysmapSym, \procmapSym)$ be a diagram over $\Phi$ in $\mathcal{C}$.
    A \textbf{controlled process associated to $\Delta$} is a controlled process $(G, p, m)$ in $\mathcal{C}$ associated to the family $\left(\diagramproc{\restrict{\Delta}{\Theta}}\right)_{\Theta \in \definiteCausAssoc{\Phi}}$ of induced diagrams over all definite causal scenarios $\Theta$ compatible with $\Phi$.
\end{definition}

It is worth noting that the semantics for diagrams over indefinite causal scenarios result in a \emph{controlled} process $(G, p, m)$: the control system is left open on the side, allowing preparations and observations to be used to control the causal order in all possible ways.
Regardless of the specific controlled process and regardless of the specific probabilistic theory, the following will always work.

If we pre-compose the controlled process $G$ with $p$ on the control system we are able to classically control the causal order.
Specifically, feeding a specific value $\Theta \in \definiteCausAssoc{\Phi}$ as input to $p$ results in the diagram $\restrict{\Delta}{\Theta}$ for the definite causal scenario $\Theta$.
More generally, feeding a a probability distribution $\sum_{\Theta \in \definiteCausAssoc{\Phi}} \mathbb{P}(\Theta) \delta_\Theta$ as input to $p$ results in a convex mixture of the diagrams associated to the definite scenarios, each diagram $\restrict{\Delta}{\Theta}$ happening with probability $\mathbb{P}(\Theta)$.

If we pre-compose the controlled process $G$ with a normalised state $\rho$ on the control system and we post-compose it with the observation $m$ on the control system, we obtain again a convex mixture of the diagrams associated with the definite scenarios, with probability distribution given by the classical state $m \circ \rho$.
If instead of the observation $m$ we use the discarding map on the control system, we obtain the same convex mixture, but now without being able of extracting information about the causal order from the classical outcome of the observation $m$.
\footnote{This is because applying the discarding map on the control system is the same as first applying the observation $m$ and then discarding its classical outcome, resulting in a mixture.}

The last observation, that discarding the control system always yields a convex mixture of causal orders, will play an important role in the next Section, when we construct superpositions of causal orders in quantum theory.
Indeed, the observation implies that we cannot obtain superposition of causal orders by preparing the control system in a superposition and then discarding it after we are done.
This is because the act of discarding is an \emph{epistemic} one: it simply means that information about the system is not locally available, not that it can never be recovered.
In order to obtain a \emph{true} superposition, we will have to permanently \emph{erase} the information about definite causal orders, by measuring in an unbiased basis (e.g. the Fourier one).

\section{Superposition of Causal Orders}

In this Section, we show how quantum superposition of causal orders can be modelled within our framework.
We start by looking at the \emph{quantum switch} and then proceed to generalise our construction to arbitrary indefinite causal scenarios.
We conclude by showing that superpositions of causal orders can be constructed purely as a function of the quantum instruments operated at the events, with no dependence on a choice of purification for the CP maps.
This is somewhat surprising---in light of Proposition~\ref{proposition:coherent-control-cp-nogo}---and it is a consequence of the fact that each discarded environment refers to the same local CP map in all branches of the superposition (something which is not true in general coherent control of CP maps, e.g. in the circumstances considered by \cite{oi2003}).

The \emph{quantum switch} is an indefinite causal scenario widely studied in recent literature \cite{Chiribella2013,salek2018,Oreshkov2019}.
In its simplest form it involves two parties---call them Alice and Bob---each operating some quantum instrument locally to their laboratory.
Some physical system comes into the scenario from outside, enters the first instrument (Alice's or Bob's), comes out (possibly altered), enters the second instrument (Bob's or Alice's, respectively), comes out (possibly altered again) and finally leaves the scenario.
The scenario involves the superposition of two definite causal orders: Alice-before-Bob and Bob-before-Alice.
In the $n$-partite generalisation of the quantum switch, $n$ parties operate their quantum instruments sequentially on the same physical system, resulting in a superposition of $n!$ causal orders (corresponding to all possible permutations of the parties).
It is now straightforward to model such a scenario within our framework.

\begin{definition}\label{definition:switch}
    An \textbf{$n$-partite switch} is an indefinite causal scenario satisfying the following conditions:
    \begin{itemize}
        \item there are $n$ events, corresponding to the $n$ parties;
        \item the set $\Xi$ has a single element, as the same physical system is operated upon by all parties;
        \item the sequences $\Sigma^{in}_\omega$, $\Sigma^{out}_\omega$, $\Pi^{in}$ and $\Pi^{out}$ each have a single element, forcing the parties to operate on the same physical system one after the other.
    \end{itemize}
    The classical inputs $I_\omega$ and classical outputs $O_\omega$ are free to choose.
\end{definition}

\noindent
If $\Phi$ is an $n$-partite switch, the semantics of a diagram $\Delta$ over $\Phi$ in a probabilistic theory is given by a controlled process where each choice of permutation $\sigma \in S_n$ for the parties $\{1,...,n\}$ results in party $\sigma(1)$ acting first, followed by party $\sigma(2)$, followed by all other parties in order until party $\sigma(n)$.

For the sake of simplicity, we will now restrict our attention to the bipartite ($n=2$) case.
The two parties are our beloved Alice and Bob, the corresponding events are called $\alpha$ and $\beta$, the physical system will be some quantum system $Z$ and the quantum instruments operated by Alice and Bob will be $F_\alpha: I_\alpha \otimes Z \rightarrow O_\alpha \otimes Z$ and $F_\beta: I_\beta \otimes Z \rightarrow O_\beta \otimes Z$ respectively.
Each quantum instrument $F_\omega: I_\omega \otimes Z \rightarrow O_\omega \otimes Z$ (for $\omega \in \{\alpha, \beta\}$) is defined by a family of CP maps $F_\omega(o_\omega \vert i_\omega): Z \rightarrow Z$ indexed by each possible classical input value $i_\omega \in I_\omega$ and classical output value $o_\omega \in O_\omega$, subject to the normalisation requirement that $\sum_{o_\omega \in O_\omega} F_\omega(o_\omega \vert i_\omega)$ be a CPTP map for each choice of classical input $i_\omega \in I_\omega$.
Let $\hat{F}_\omega(o_\omega \vert i_\omega): Z \rightarrow Z \otimes E_\omega$ be a family of purifications for the CP maps, chosen (without loss of generality) to all have the same environment $E_\omega$.
This results in the following scenario $\Phi$ and diagram $\Delta$:

\begin{equation}\label{eq:bipartiteSwitch}
    \tikzfig{figure9}
\end{equation}

\noindent
We can construct a controlled process associated to $\Delta$ by using our definition of coherently controlled processes from Section~\ref{section:coherent-control-quantum}.
Specifically, for each fixed $\underline{i} = (i_\alpha, i_\beta) \in \underline{I}$ and $\underline{o} = (o_\alpha, o_\beta) \in \underline{O}$ we can define the following pure controlled processes from the purifications:

\begin{equation}\label{eq:bipartiteSwitchPureControl}
    \tikzfig{figure10a}
    \hspace{4mm} \textrm{where} \hspace{6mm}
    \tikzfig{figure10b}
    \hspace{2mm} := \hspace{2mm}
    \tikzfig{figure10c}
    \hspace{4mm} \textrm{and} \hspace{6mm}
    \tikzfig{figure10d}
    \hspace{2mm} := \hspace{2mm}
    \tikzfig{figure10e}
\end{equation}

\noindent
If we discard the two environment $E_\alpha$ and $E_\beta$ and we reintroduce the classical inputs and outputs, we obtain the controlled process for $\Delta$:

\begin{equation}\label{eq:bipartiteSwitchControl}
    \tikzfig{figure11}
\end{equation}

\noindent
As mentioned in the previous Section, the controlled process above is very general: amongst other things, we can plug any qubit state into the control systems and perform any measurement on it afterwards.
In order to obtain a true superposition of the two causal orders, we use some phase state $\ket{\varphi} := \frac{1}{\sqrt{2}}(\ket{0} + e^{i\varphi} \ket{1})$ and measure in the Pauli X basis.
The two measurement outcomes $\bra{\pm}$ then correspond to families of processes---indexed by the classical inputs $\underline{i} \in \underline{I}$ and classical outputs $\underline{o} \in \underline{O}$---which see a superposition with phase of the two causal orders.
In traditional notation, the processes can be written as follows:

\begin{equation*}
    \scalebox{0.95}{$
    \frac{1}{4} \Tr_{E_\alpha \otimes E_\beta}\Big[
        \operatorname{dbl}\left[\left(\hat{F}_\alpha(o_\alpha | i_\alpha) \otimes \id{E_\beta}\right) \circ \hat{F}_\beta(o_\beta | i_\beta)\right]
        \pm
        e^{i\varphi}
        \operatorname{dbl}\left[\left(\id{Z} \otimes \sigma_{E_\beta, E_\alpha}\right) \circ \left(\hat{F}_\beta(o_\beta | i_\beta) \otimes \id{E_\alpha}\right) \circ \hat{F}_\alpha(o_\alpha | i_\alpha)\right] \Big]
    $}
\end{equation*}

\noindent
where we introduced the short-hand $\operatorname{dbl}[U]$ for the CP map $\operatorname{dbl}[U] := U U^\dagger$ corresponding to a linear map $U$ and we have freely confused the pure CP maps $\hat{F}_\omega(o_\omega | i_\omega)$ with the corresponding linear maps.

Proposition~\ref{proposition:coherent-control-cp-nogo} tells us that, in general, a controlled process such as~\eqref{eq:bipartiteSwitchControl} will depend on our choices of purification $\hat{F}_\omega$.
However, this turns out not to be the case for the switch.
Indeed, we can pull the two environments to the boundary of the scenario, keeping them throughout the superposition of causal orders and only discarding afterwards:

\begin{equation}\label{eq:switchPurification1}
    \tikzfig{switchBA}
    \hspace{5mm}
    \tikzfig{switchABnew}
\end{equation}

\noindent
Any alternative choice of purification can be obtained by applying a suitable unitary to each environment.
However, the same map $\hat{F}_{\omega}$ appears at the bottom of the environment $E_\omega$ in all branches of the superposition, and hence the same unitary appears applied to the environment.
This means that we can pull the unitaries themselves out of the scenario:

\begin{equation}\label{eq:switchPurification2}
    \tikzfig{switchBAnewpurification}
    \hspace{5mm}
    \tikzfig{switchABnewpurification}
\end{equation}

\noindent
The unitaries will then be cancelled by the discarding maps, leading us to conclude that the controlled process we constructed for the switch was actually independent of our choices of purification $\hat{F}_\omega$ and is therefore a function of the original quantum instruments $F_\omega$.
This argument is not unique to the switch, but instead generalises to coherent control for all diagrams over indefinite causal scenarios in quantum theory, as dictated by our final result below.
As a consequence, our framework can be used to give well-defined semantics to superposition of causal orders in quantum theory.

\begin{definition}\label{definition:purification-scenarios-diagrams}
    Let $\Phi = (\Omega, \Xi, \underline{I}, \underline{O}, \underline{\Sigma}^{in}, \underline{\Sigma}^{out}, \Pi^{in}, \Pi^{out})$ be an indefinite causal scenario and $\Delta$ be a diagram over $\Phi$ in quantum theory.
    The \textbf{purification} of $\Phi$ is the indefinite causal scenario obtained from $\Phi$ by adding fresh symbols $\varepsilon_\omega$ to $\Xi$ and $\Pi^{out}$ for all $\omega \in \Omega$ (in some chosen order).
    A \textbf{purification of $\Delta$ with environments $(E_\omega)_{\omega \in \Omega}$} is the diagram over the purification of $\Phi$ obtained by considering purifications of all CP maps in $\Delta$, with fixed environment $E_\omega$ for each event $\omega \in \Omega$.
    \footnote{The purifications considered here are for each fixed choice of classical input and output values, i.e. they are indexed families of purifications $\hat{F}_\omega(o_\omega | i_\omega)$ for the CP maps $F_\omega(o_\omega | i_\omega)$ corresponding to each given choice of $i_\omega \in I_\omega$ and $o_\omega \in O_\omega$.}
\end{definition}

\noindent
The purification of a diagram---and the necessary corresponding ``purification'' of the underlying indefinite causal scenario---generalise the process seen in~\eqref{eq:switchPurification1} and~\eqref{eq:switchPurification2} above, where the environment wires were ``pulled to the boundary'' of the diagram.
We use purifications of a diagram $\Delta$ to define its coherent control.

\begin{definition}\label{definition:coherent-control-causal-orders}
    Let $\Phi$ be an indefinite causal scenario and $\Delta$ be a diagram over $\Phi$ in quantum theory.
    Let $P(\varphi) := \sum_{\Theta \in \definiteCausAssoc{\Phi}} e^{i \varphi_\Theta}\ket{\Theta}\bra{\Theta}$ be a phase gate for the computational basis of $\complexs^{\definiteCausAssoc{\Phi}}$.
    The \textbf{coherent control of $\Delta$ with phase $P(\varphi)$} is defined to be the coherent control---with phase $P(\varphi)$ and control system $\complexs^{\definiteCausAssoc{\Phi}}$---of the following family of processes, where $\Delta^{pure}$ is a purification of $\Delta$ with environments $(E_\omega)_{\omega \in \Omega}$ and we have defined the global environment $E := \bigotimes\limits_{\omega \in \Omega} E_\omega$:
    \begin{equation}
        \Big(
        \Tr_{E}
        \diagramproc{\restrict{\Delta^{pure}}{\Theta}}
        \Big)_{\Theta \in \definiteCausAssoc{\Phi}}
    \end{equation}
    Note that $\Tr_{E}\diagramproc{\restrict{\Delta^{pure}}{\Theta}}$ is a diagram over $\Theta$ for all $\Theta \in \definiteCausAssoc{\Phi}$.
\end{definition}

\newcounter{proposition_coherent_control0causal_orders}
\setcounter{proposition_coherent_control0causal_orders}{\value{theorem_c}}
\begin{proposition}\label{proposition:coherent-control-causal-orders}
    Let $\Phi$ be an indefinite causal scenario and $\Delta$ be a diagram over $\Phi$ in quantum theory.
    For each possible choice of phase gate $P(\varphi)$ for the computational basis of $\complexs^{\definiteCausAssoc{\Phi}}$, the coherent control of $\Delta$ with phase $P(\varphi)$ is well-defined independently of the choice of purification for the processes in $\Delta$.
\end{proposition}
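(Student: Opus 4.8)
\medskip
\noindent\textbf{Proof proposal.} The plan is to carry out, for an arbitrary indefinite causal scenario, the argument illustrated for the quantum switch around Equations~\eqref{eq:switchPurification1}--\eqref{eq:switchPurification2}. The one structural fact we need is that, in the purification of Definition~\ref{definition:purification-scenarios-diagrams}, the environment of an event $\omega$ is carried by a dedicated boundary output wire $\varepsilon_\omega$ attached directly to $\procmap{\omega}$, and that the position of this wire inside $\diagramproc{\restrict{\Delta^{pure}}{\Theta}}$ is the same for every definite causal scenario $\Theta \in \definiteCausAssoc{\Phi}$. Fix a phase gate $P(\varphi)$ and two purifications $\Delta^{pure}$, $\tilde{\Delta}^{pure}$ of $\Delta$; we must show that the two resulting coherent controls of $\Delta$ agree.

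First I would normalise the situation so that the two purifications use the same family of environments $(E_\omega)_{\omega \in \Omega}$: padding each purification with a fixed pure ancilla state turns both into purifications into the common environments $E_\omega \otimes E_\omega'$, and this padding is annihilated by the final trace $\Tr_E$ and affects neither the family $(\Tr_E \diagramproc{\restrict{\Delta^{pure}}{\Theta}})_{\Theta}$, nor the pure controlled process built from it, nor its coherent control. So assume $\Delta^{pure}$ and $\tilde{\Delta}^{pure}$ share the environments $(E_\omega)$.

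Next, for each event $\omega$ and each fixed choice of classical values $(i_\omega, o_\omega)$, uniqueness of Kraus/Stinespring dilations gives a unitary $U_\omega(o_\omega \vert i_\omega): E_\omega \to E_\omega$ with $\tilde{\hat{F}}_\omega(o_\omega \vert i_\omega) = (\id{Z} \otimes U_\omega(o_\omega \vert i_\omega)) \circ \hat{F}_\omega(o_\omega \vert i_\omega)$. Since $\procmap{\omega}$ occurs exactly once inside the definite diagram, with its environment leg $\varepsilon_\omega$ dangling at the boundary regardless of $\Theta$, replacing every purification in this way turns each $\diagramproc{\restrict{\Delta^{pure}}{\Theta}}$ into $\mathcal{U} \circ \diagramproc{\restrict{\Delta^{pure}}{\Theta}}$, where $\mathcal{U}$ acts only on the environment boundary wires $(\varepsilon_\omega)_{\omega \in \Omega}$ as the tensor product of the $U_\omega(o_\omega \vert i_\omega)$ (classically controlled by the relevant outcome wires). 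Crucially --- and unlike the general predicament of Proposition~\ref{proposition:coherent-control-cp-nogo} --- this $\mathcal{U}$ is \emph{the same process for every} $\Theta$: it touches neither the physical boundary systems nor the control system $\complexs^{\definiteCausAssoc{\Phi}}$. This is precisely the step from~\eqref{eq:switchPurification1} to~\eqref{eq:switchPurification2}, now in general.

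Finally I would push this through the construction of Definition~\ref{definition:coherent-control-causal-orders}. Because the coherent control of a pure family with phase $P(\varphi)$ has the form~\eqref{eqn:controlledisometry} (a controlled superposition of the individual branches), post-composing every branch by the same, $\Theta$-independent $\mathcal{U}$ simply pulls $\mathcal{U}$ out to the left of the control: if $G'$ is the pure controlled process built from $\Delta^{pure}$, then the one built from $\tilde{\Delta}^{pure}$ is $\mathcal{U} \circ G'$. Tracing out the global environment $E = \bigotimes_{\omega \in \Omega} E_\omega$ then cancels $\mathcal{U}$, since $\mathcal{U}$ restricts to a unitary --- hence a trace-preserving map --- on $E$ for each value of the classical wires, so $\Tr_E(\mathcal{U} \circ G') = \Tr_E(G')$, i.e. the two coherent controls of $\Delta$ coincide. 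I expect the genuine work to be purely organisational: making Definition~\ref{definition:purification-scenarios-diagrams} explicit enough to state cleanly the claimed $\Theta$-independence of the environment wires, and routing the classical input/output values carefully (so that $\mathcal{U}$ is, strictly speaking, a classically controlled unitary); neither complication interferes with the cancellation under $\Tr_E$.
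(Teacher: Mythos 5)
Your proposal is correct and follows essentially the same route as the paper's own proof: the change of purification factors as event-local unitaries $U_\omega(o_\omega\vert i_\omega)$ which, because each environment wire $\varepsilon_\omega$ sits at the boundary in the same position for every $\Theta \in \definiteCausAssoc{\Phi}$, assemble into a single $\Theta$-independent (classically controlled) unitary on the global environment that commutes past the control and is absorbed by $\Tr_E$. Your preliminary padding step to equalise the environments is a small extra care the paper leaves implicit, but it does not change the argument.
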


\noindent
Having shown that the coherent control with phase for diagrams over indefinite causal scenarios is a well-defined concept, we conclude by providing the following definition for sake of clarity and future use.

\begin{definition}\label{definition:superposition-causal-order}
    Let $\Phi$ be an indefinite causal scenario and $\Delta$ be a diagram over $\Phi$ in quantum theory.
    By a \textbf{superposition of causal orders for $\Delta$ with phase $P(\varphi)$} we mean the quantum instrument obtained by:
    \begin{enumerate}
        \item[(i)] considering the coherent control of $\Delta$ with phase $P(\varphi)$;
        \item[(ii)] pre-composing the control system with the uniform superposition state $\ket{+}$ below:
        \begin{equation}
            \ket{+}
            :=
            \frac{1}{\sqrt{|\definiteCausAssoc{\Phi}|}}
            \sum_{\Theta \in \definiteCausAssoc{\Phi}} \ket{\Theta}
        \end{equation}
        \item[(iii)] post-composing the control system with some measurement which is unbiased with respect to the computational basis (e.g. one in the Fourier basis for some group structure on $\definiteCausAssoc{\Phi}$).
    \end{enumerate}
    Note that the choice of fixed input state $\ket{+}$ was made to avoid redundancy in the relative phase between the different causal orders, which is already controlled by $P(\varphi)$.
\end{definition}

\section*{Acknowledgements}

NP would like to acknowledge financial support from EPSRC and the Pirie-Reid Scholarship.
SG would like to acknowledge support from Hashberg Ltd.

\newpage
\bibliography{paperQPL2020}

\begin{thebibliography}{10}
\providecommand{\bibitemdeclare}[2]{}
\providecommand{\surnamestart}{}
\providecommand{\surnameend}{}
\providecommand{\urlprefix}{Available at }
\providecommand{\url}[1]{\texttt{#1}}
\providecommand{\href}[2]{\texttt{#2}}
\providecommand{\urlalt}[2]{\href{#1}{#2}}
\providecommand{\doi}[1]{doi:\urlalt{http://dx.doi.org/#1}{#1}}
\providecommand{\bibinfo}[2]{#2}

\bibitemdeclare{article}{abbott2018}
\bibitem{abbott2018}
\bibinfo{author}{Alastair~A. \surnamestart Abbott\surnameend},
  \bibinfo{author}{Julian \surnamestart Wechs\surnameend},
  \bibinfo{author}{Dominic \surnamestart Horsman\surnameend},
  \bibinfo{author}{Mehdi \surnamestart Mhalla\surnameend} \&
  \bibinfo{author}{Cyril \surnamestart Branciard\surnameend}
  (\bibinfo{year}{2018}): \emph{\bibinfo{title}{Communication through coherent
  control of quantum channels}}.
\newblock \urlprefix\url{https://arxiv.org/abs/1810.09826}.

\bibitemdeclare{article}{Araujo2014}
\bibitem{Araujo2014}
\bibinfo{author}{Mateus \surnamestart Ara{\'{u}}jo\surnameend},
  \bibinfo{author}{Adrien \surnamestart Feix\surnameend},
  \bibinfo{author}{Fabio \surnamestart Costa\surnameend} \&
  \bibinfo{author}{{\v{C}}aslav \surnamestart Brukner\surnameend}
  (\bibinfo{year}{2014}): \emph{\bibinfo{title}{Quantum circuits cannot control
  unknown operations}}.
\newblock {\sl \bibinfo{journal}{New Journal of Physics}}
  \bibinfo{volume}{16}(\bibinfo{number}{9}), p. \bibinfo{pages}{093026},
  \doi{10.1088/1367-2630/16/9/093026}.

\bibitemdeclare{article}{chiribella2010}
\bibitem{chiribella2010}
\bibinfo{author}{Giulio \surnamestart Chiribella\surnameend},
  \bibinfo{author}{Giacomo~Mauro \surnamestart D'Ariano\surnameend} \&
  \bibinfo{author}{Paolo \surnamestart Perinotti\surnameend}
  (\bibinfo{year}{2010}): \emph{\bibinfo{title}{Probabilistic theories with
  purification}}.
\newblock {\sl \bibinfo{journal}{Physical Review A}} \bibinfo{volume}{81}, p.
  \bibinfo{pages}{062348}, \doi{10.1103/PhysRevA.81.062348}.

\bibitemdeclare{article}{Chiribella2013}
\bibitem{Chiribella2013}
\bibinfo{author}{Giulio \surnamestart Chiribella\surnameend},
  \bibinfo{author}{Giacomo~Mauro \surnamestart D’Ariano\surnameend},
  \bibinfo{author}{Paolo \surnamestart Perinotti\surnameend} \&
  \bibinfo{author}{Benoit \surnamestart Valiron\surnameend}
  (\bibinfo{year}{2013}): \emph{\bibinfo{title}{Quantum computations without
  definite causal structure}}.
\newblock {\sl \bibinfo{journal}{Physical Review A}}
  \bibinfo{volume}{88}(\bibinfo{number}{2}), \doi{10.1103/physreva.88.022318}.

\bibitemdeclare{article}{Chiribella2019}
\bibitem{Chiribella2019}
\bibinfo{author}{Giulio \surnamestart Chiribella\surnameend} \&
  \bibinfo{author}{Hlér \surnamestart Kristjánsson\surnameend}
  (\bibinfo{year}{2019}): \emph{\bibinfo{title}{Quantum Shannon theory with
  superpositions of trajectories}}.
\newblock {\sl \bibinfo{journal}{Proceedings of the Royal Society A:
  Mathematical, Physical and Engineering Sciences}}
  \bibinfo{volume}{475}(\bibinfo{number}{2225}), p. \bibinfo{pages}{20180903},
  \doi{10.1098/rspa.2018.0903}.

\bibitemdeclare{article}{Rovelli2019}
\bibitem{Rovelli2019}
\bibinfo{author}{Marios \surnamestart Christodoulou\surnameend} \&
  \bibinfo{author}{Carlo \surnamestart Rovelli\surnameend}
  (\bibinfo{year}{2019}): \emph{\bibinfo{title}{On the possibility of
  laboratory evidence for quantum superposition of geometries}}.
\newblock {\sl \bibinfo{journal}{Physics Letters B}} \bibinfo{volume}{792}, pp.
  \bibinfo{pages}{64 -- 68}, \doi{10.1016/j.physletb.2019.03.015}.

\bibitemdeclare{article}{CD2}
\bibitem{CD2}
\bibinfo{author}{Bob \surnamestart Coecke\surnameend} \& \bibinfo{author}{Ross
  \surnamestart Duncan\surnameend} (\bibinfo{year}{2011}):
  \emph{\bibinfo{title}{Interacting quantum observables: categorical algebra
  and diagrammatics}}.
\newblock {\sl \bibinfo{journal}{New Journal of Physics}} \bibinfo{volume}{13},
  p. \bibinfo{pages}{043016}, \doi{10.1088/1367-2630/13/4/043016}.

\bibitemdeclare{article}{Coecke2016}
\bibitem{Coecke2016}
\bibinfo{author}{Bob \surnamestart Coecke\surnameend}, \bibinfo{author}{Chris
  \surnamestart Heunen\surnameend} \& \bibinfo{author}{Aleks \surnamestart
  Kissinger\surnameend} (\bibinfo{year}{2016}):
  \emph{\bibinfo{title}{Categories of quantum and classical channels}}.
\newblock {\sl \bibinfo{journal}{Quantum Information Processing}}
  \bibinfo{volume}{15}(\bibinfo{number}{12}), pp. \bibinfo{pages}{5179--5209},
  \doi{10.1007/s11128-014-0837-4}.

\bibitemdeclare{article}{Coecke2012}
\bibitem{Coecke2012}
\bibinfo{author}{Bob \surnamestart Coecke\surnameend} \&
  \bibinfo{author}{Raymond \surnamestart Lal\surnameend}
  (\bibinfo{year}{2012}): \emph{\bibinfo{title}{Causal Categories:
  Relativistically Interacting Processes}}.
\newblock {\sl \bibinfo{journal}{Foundations of Physics}}
  \bibinfo{volume}{43}(\bibinfo{number}{4}), p. \bibinfo{pages}{458–501},
  \doi{10.1007/s10701-012-9646-8}.

\bibitemdeclare{book}{chiribella2017}
\bibitem{chiribella2017}
\bibinfo{author}{Giacomo~Mauro \surnamestart D'Ariano\surnameend},
  \bibinfo{author}{Giulio \surnamestart Chiribella\surnameend} \&
  \bibinfo{author}{Paolo \surnamestart Perinotti\surnameend}
  (\bibinfo{year}{2017}): \emph{\bibinfo{title}{Quantum Theory from First
  Principles: An Informational Approach}}.
\newblock \bibinfo{publisher}{Cambridge University Press},
  \doi{10.1017/9781107338340}.

\bibitemdeclare{article}{Doering2010}
\bibitem{Doering2010}
\bibinfo{author}{Andreas \surnamestart Döring\surnameend} \&
  \bibinfo{author}{Chris~J. \surnamestart Isham\surnameend}
  (\bibinfo{year}{2010}): \emph{\bibinfo{title}{“What is a Thing?”: Topos
  Theory in the Foundations of Physics}}.
\newblock {\sl \bibinfo{journal}{Lecture Notes in Physics}}, p.
  \bibinfo{pages}{753–937}, \doi{10.1007/978-3-642-12821-913}.

\bibitemdeclare{article}{DAriano2018}
\bibitem{DAriano2018}
\bibinfo{author}{Giacomo~Mauro \surnamestart D’Ariano\surnameend}
  (\bibinfo{year}{2018}): \emph{\bibinfo{title}{Causality re-established}}.
\newblock {\sl \bibinfo{journal}{Philosophical Transactions of the Royal
  Society A: Mathematical, Physical and Engineering Sciences}}
  \bibinfo{volume}{376}(\bibinfo{number}{2123}), p. \bibinfo{pages}{20170313},
  \doi{10.1098/rsta.2017.0313}.

\bibitemdeclare{misc}{gogioso2017}
\bibitem{gogioso2017}
\bibinfo{author}{Stefano \surnamestart Gogioso\surnameend}
  (\bibinfo{year}{2017}): \emph{\bibinfo{title}{Fantastic Quantum Theories and
  Where to Find Them}}.
\newblock \urlprefix\url{https://arxiv.org/abs/1703.10576}.

\bibitemdeclare{article}{Gogioso2018}
\bibitem{Gogioso2018}
\bibinfo{author}{Stefano \surnamestart Gogioso\surnameend} \&
  \bibinfo{author}{Carlo~Maria \surnamestart Scandolo\surnameend}
  (\bibinfo{year}{2018}): \emph{\bibinfo{title}{Categorical Probabilistic
  Theories}}.
\newblock {\sl \bibinfo{journal}{Electronic Proceedings in Theoretical Computer
  Science}} \bibinfo{volume}{266}, p. \bibinfo{pages}{367–385},
  \doi{10.4204/eptcs.266.23}.

\bibitemdeclare{inbook}{Isham2011}
\bibitem{Isham2011}
\bibinfo{author}{Chris~J. \surnamestart Isham\surnameend}
  (\bibinfo{year}{2011}): \emph{\bibinfo{title}{Topos Methods in the
  Foundations of Physics}}, p. \bibinfo{pages}{187–206}.
\newblock \bibinfo{publisher}{Cambridge University Press},
  \doi{10.1017/CBO9780511976971.005}.

\bibitemdeclare{article}{Malament1977}
\bibitem{Malament1977}
\bibinfo{author}{David~B. \surnamestart Malament\surnameend}
  (\bibinfo{year}{1977}): \emph{\bibinfo{title}{The class of continuous
  timelike curves determines the topology of spacetime}}.
\newblock {\sl \bibinfo{journal}{Journal of Mathematical Physics}}
  \bibinfo{volume}{18}(\bibinfo{number}{7}), pp. \bibinfo{pages}{1399--1404},
  \doi{10.1063/1.523436}.

\bibitemdeclare{article}{Marletto2017}
\bibitem{Marletto2017}
\bibinfo{author}{Chiara \surnamestart {Marletto}\surnameend} \&
  \bibinfo{author}{Vladko \surnamestart {Vedral}\surnameend}
  (\bibinfo{year}{2017}): \emph{\bibinfo{title}{{Gravitationally Induced
  Entanglement between Two Massive Particles is Sufficient Evidence of Quantum
  Effects in Gravity}}}.
\newblock {\sl \bibinfo{journal}{Physical Review Letters}}
  \bibinfo{volume}{119}(\bibinfo{number}{24}):\bibinfo{eid}{240402},
  \doi{10.1103/PhysRevLett.119.240402}.

\bibitemdeclare{article}{Marletto2019}
\bibitem{Marletto2019}
\bibinfo{author}{Chiara \surnamestart {Marletto}\surnameend} \&
  \bibinfo{author}{Vlatko \surnamestart {Vedral}\surnameend}
  (\bibinfo{year}{2019}): \emph{\bibinfo{title}{{Answers to a few questions
  regarding the BMV experiment}}}.
\newblock \urlprefix\url{https://arxiv.org/abs/1907.08994}.

\bibitemdeclare{article}{oi2003}
\bibitem{oi2003}
\bibinfo{author}{Daniel K.~L. \surnamestart Oi\surnameend}
  (\bibinfo{year}{2003}): \emph{\bibinfo{title}{Interference of Quantum
  Channels}}.
\newblock {\sl \bibinfo{journal}{Physical Review Letters}}
  \bibinfo{volume}{91}(\bibinfo{number}{6}),
  \doi{10.1103/physrevlett.91.067902}.

\bibitemdeclare{article}{Oreshkov2019}
\bibitem{Oreshkov2019}
\bibinfo{author}{Ognyan \surnamestart Oreshkov\surnameend}
  (\bibinfo{year}{2019}): \emph{\bibinfo{title}{Time-delocalized quantum
  subsystems and operations: on the existence of processes with indefinite
  causal structure in quantum mechanics}}.
\newblock {\sl \bibinfo{journal}{{Quantum}}} \bibinfo{volume}{3}, p.
  \bibinfo{pages}{206}, \doi{10.22331/q-2019-12-02-206}.

\bibitemdeclare{article}{Oreshkov2012}
\bibitem{Oreshkov2012}
\bibinfo{author}{Ognyan \surnamestart Oreshkov\surnameend},
  \bibinfo{author}{Fabio \surnamestart Costa\surnameend} \&
  \bibinfo{author}{{\v C}aslav \surnamestart Brukner\surnameend}
  (\bibinfo{year}{2012}): \emph{\bibinfo{title}{Quantum correlations with no
  causal order}}.
\newblock {\sl \bibinfo{journal}{Nature Communications}}
  \bibinfo{volume}{3}(\bibinfo{number}{1}), p. \bibinfo{pages}{1092},
  \doi{10.1038/ncomms2076}.

\bibitemdeclare{misc}{paunkovic2019causal}
\bibitem{paunkovic2019causal}
\bibinfo{author}{Nikola \surnamestart Paunkovic\surnameend} \&
  \bibinfo{author}{Marko \surnamestart Vojinovic\surnameend}
  (\bibinfo{year}{2019}): \emph{\bibinfo{title}{Causal orders, quantum circuits
  and spacetime: distinguishing between definite and superposed causal
  orders}}.
\newblock \urlprefix\url{https://arxiv.org/abs/1905.09682}.

\bibitemdeclare{inproceedings}{Pinzani2019}
\bibitem{Pinzani2019}
\bibinfo{author}{Nicola \surnamestart Pinzani\surnameend},
  \bibinfo{author}{Stefano \surnamestart Gogioso\surnameend} \&
  \bibinfo{author}{Bob \surnamestart Coecke\surnameend} (\bibinfo{year}{2019}):
  \emph{\bibinfo{title}{Categorical Semantics for Time Travel}}.
\newblock In: {\sl \bibinfo{booktitle}{2019 34th Annual ACM/IEEE Symposium on
  Logic in Computer Science (LICS)}}, \bibinfo{publisher}{IEEE Computer
  Society}, \bibinfo{address}{Los Alamitos, CA, USA}, pp.
  \bibinfo{pages}{1--20}, \doi{10.1109/LICS.2019.8785664}.

\bibitemdeclare{article}{salek2018}
\bibitem{salek2018}
\bibinfo{author}{Sina \surnamestart Salek\surnameend}, \bibinfo{author}{Daniel
  \surnamestart Ebler\surnameend} \& \bibinfo{author}{Giulio \surnamestart
  Chiribella\surnameend} (\bibinfo{year}{2018}): \emph{\bibinfo{title}{Quantum
  communication in a superposition of causal orders}}.
\newblock \urlprefix\url{https://arxiv.org/abs/1809.06655}.

\bibitemdeclare{article}{Thompson2018}
\bibitem{Thompson2018}
\bibinfo{author}{Jayne \surnamestart Thompson\surnameend},
  \bibinfo{author}{Kavan \surnamestart Modi\surnameend},
  \bibinfo{author}{Vlatko \surnamestart Vedral\surnameend} \&
  \bibinfo{author}{Mile \surnamestart Gu\surnameend} (\bibinfo{year}{2018}):
  \emph{\bibinfo{title}{Quantum plug n’ play: modular computation in the
  quantum regime}}.
\newblock {\sl \bibinfo{journal}{New Journal of Physics}}
  \bibinfo{volume}{20}(\bibinfo{number}{1}), p. \bibinfo{pages}{013004},
  \doi{10.1088/1367-2630/aa99b3}.

\bibitemdeclare{article}{Bruckner2019}
\bibitem{Bruckner2019}
\bibinfo{author}{Magdalena \surnamestart Zych\surnameend},
  \bibinfo{author}{Fabio \surnamestart Costa\surnameend}, \bibinfo{author}{Igor
  \surnamestart Pikovski\surnameend} \& \bibinfo{author}{{\v C}aslav
  \surnamestart Brukner\surnameend} (\bibinfo{year}{2019}):
  \emph{\bibinfo{title}{Bell's theorem for temporal order}}.
\newblock {\sl \bibinfo{journal}{Nature Communications}}
  \bibinfo{volume}{10}(\bibinfo{number}{1}), p. \bibinfo{pages}{3772},
  \doi{10.1038/s41467-019-11579-x}.

\end{thebibliography}
\bibliographystyle{eptcs}

\appendix

\section{Categorical Probabilistic Theories}
\label{appendix:cpt}

Quantum theory can be formulated using different mathematical formalisms, each highlighting different conceptual aspects of the theory.
In the purely categorical formulation of process theories \cite{Coecke2016}, classical systems emerge as a compositional part of the theory itself, but the probabilistic structure is not explicitly axiomatised.
In the formulation of \emph{operational probabilistic theories} (OPT) by \cite{chiribella2010,chiribella2017}, on the other hand, the probabilistic structure is explicitly axiomatised, but classical systems are \emph{not} treated compositionally as part of the theory, with the quantum-classical interface relegated to a mere labelling of quantum processes.
The formulation of probabilistic theories by \cite{Gogioso2018} tries to take the best of both worlds: the compositional treatment of classical system from categorical formulations and the explicit axiomatisation of probabilistic structure from OPTs.
This leads to the following definition of a probabilistic theory.

\begin{definition}

    A \textbf{probabilistic theory} is a (strict) symmetric monoidal category (SMC) $\CategoryC$ which satisfies the following requirements:

    \begin{itemize}

    	\item there is a full sub-SMC of $\CategoryC$, denoted by $\CategoryC_K$, which is equivalent to the SMC $\RMatCategory{\reals^+}$ modelling \textbf{classical theory} (itself a probabilistic theory);

    	\item the SMC $\CategoryC$ is enriched in commutative monoids and the induced enrichment on $\CategoryC_K$ coincides with the one given by the linear structure of $\RMatCategory{\reals^+}$;

    	\item the SMC $\CategoryC$ comes with an environment structure, i.e with a family of effects $\top_A : A \rightarrow 1$ which satisfy the following requirements:
    	\begin{equation}
    		\tikzfig{environmentstructure}
    	\end{equation}
        The environment structure induced on $\CategoryC_K$ coincides with the one given by marginalisation in $\RMatCategory{\reals^+}$.

    \end{itemize}

    \noindent
    The requirements above imply that processes $A \rightarrow B$ in $\CategoryC$ have the structure of a \emph{convex cone}, i.e. they are $\reals^+$-modules.
    The effects $\top_A : A \rightarrow 1$ are known as \textbf{discarding maps}.
    The systems in the full sub-category $\CategoryC_K$ are known as \textbf{classical systems} and the processes between them as \textbf{classical processes}.
\end{definition}

\noindent
The probabilistic theory most relevant to this work is \emph{quantum theory}, defined by taking CP maps together with classical theory and introducing the quantum-classical interface by linearity from families of quantum processes (cf. resolution of the classical identity below).

From a diagrammatic perspective, dashed wires are used to denote systems which are guaranteed to be classical, while solid wires denote generic systems.
It is convenient to assume the following general form for processes in probabilistic theories, with distinguished classical input and output systems (finite sets $X$ and $Y$ respectively):

\begin{equation}\label{eq:generalprocessCPT}
    \tikzfig{generalprocessCPT}
\end{equation}

\noindent
The linear structure of $\RMatCategory{\reals^+}$ can be used to explicitly perform resolutions of the classical identity:

\begin{equation}
	\tikzfig{identityclassical}
\end{equation}

\noindent
The resolution of the classical identity can be used to freely confuse between the following two perspectives, establishing a direct link to the OPT formalism:

\begin{itemize}
    \item processes $F: A \otimes X \rightarrow B \otimes Y$, with $X$ and $Y$ classical systems (i.e finite sets);
    \item families $F(y | x): A \rightarrow B$ of processes indexed by the classical values $x \in X$ and $y \in Y$.
\end{itemize}

\noindent
The discarding maps axiomatise the notion of marginalisation (aka \emph{partial trace} in the context of quantum systems).
They can also be used to define a sub-SMC of \emph{normalised} processes:

\begin{definition}
    A process $f: A \rightarrow B$ is said to be \textbf{normalised} if it satisfies the following equation:
    \begin{equation}
    	\tikzfig{normalisedprocess}
    \end{equation}
    A process $f : A \rightarrow B$ is said to be \textbf{sub-normalised} if there exists some $g: A \rightarrow B$ such that $f+g$ is normalised (in which case $g$ is also sub-normalised).
\end{definition}

\noindent
In particular, the normalised states on a classical system $X$ are the probability distributions on $X$, the normalised processes $X \rightarrow Y$ are the $Y$-by-$X$ stochastic matrices, and discarding on a classical output of a classical process is the same as marginalisation.

The axiomatisation of causality as ``no-signalling from the future'' \cite{chiribella2017} is embodied in probabilistic theories by the following observation about normalised processes:

\begin{equation}
	\tikzfig{no-signalling-future-CPT}
\end{equation}

\noindent
In the above, we see that the classical outcome of a test $f$ cannot be influenced by a controlled process $g$ in its future (i.e. it is independent of the classical input used to control $g$).

The framework also adopts a notion of \emph{purity}, defined as the lack of non-trivial interaction with a discarded environment.

\begin{definition}\label{def:purity}
    A process $g: A \rightarrow B$ is said to be \textbf{pure} if whenever we can find a system $E$ and a process $f: A \rightarrow E \otimes B$ such that the following equality holds:
    \begin{equation*}
        \tikzfig{purity0}
    \end{equation*}
    then there exists a normalised state $\psi : 1 \rightarrow E$, dependent on $f$, such that the following equality also holds:
    \begin{equation*}
        \tikzfig{purity}
    \end{equation*}
    Note that neither $g$ nor $f$ are required to be normalised as part of this definition.
\end{definition}

Finally, we recall the definition of a sharp preparation-observation pair, capturing the idea of perfect encoding/decoding of classical information in arbitrary systems.

\begin{definition}\label{definition:spo-pair}
    A \textbf{sharp preparation-observation pair} (SPO pair) is a pair $(p, m)$ of a \textbf{preparation} process $p: X \rightarrow H$ and an \textbf{observation} process $m: H \rightarrow X$ on some classical system $X$ and some arbitrary system $H$, such that the following equation holds:
    \begin{equation}
        \tikzfig{SPOa} 
        \hspace{5mm} = \hspace{5mm}
        \tikzfig{SPOb}
    \end{equation}
\end{definition}

\section{Proofs}

\subsection*{Proof of Proposition~\ref{proposition:coherent-control-pure-cp}}

\setcounter{theorem_c}{\value{proposition_coherent_control_pure_cp}}
\begin{proposition}
    Let $(F_x)_{x \in X}$ be a family of pure CP maps in quantum theory, not necessarily normalised (i.e. not necessarily trace-preserving).
    Assume that $(G,p,m)$ is a controlled process for the family with control system $\complexs^X$, where $(p, m)$ is the SPO for the canonical basis $(\ket{x})_{x \in X}$ and where $G$ is itself a pure CP map.
    Then $G$ takes form~\eqref{eqn:controlledisometry} for some phase $\alpha$.
\end{proposition}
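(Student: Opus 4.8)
The plan is to unpack what it means for $(G,p,m)$ to be a controlled process, using the fact that $(p,m)$ is the SPO pair for the canonical basis $(\ket{x})_{x\in X}$ of the control system $\complexs^X$. Concretely, $p$ prepares the basis state $\ket{x}$ from classical input $x$ and $m$ is the measurement in that basis. The two defining equations~\eqref{eqn:coherentcontrol1} then say that decoding with $m$ after $G$, having encoded a definite $x$ via $p$, recovers $F_x$ on the $A\to B$ leg, and dually. In terms of the matrix of $G$ written in the control basis — i.e. writing $G$ as a $\complexs^X$-controlled family with ``block entries'' $G_{xy}: A \to B$ indexed by input control label $y$ and output control label $x$ — these conditions pin down the diagonal blocks: $G_{xx}$ must reproduce $F_x$ (as CP maps, up to the usual SPO bookkeeping). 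The remaining content is to show the off-diagonal blocks vanish and that the diagonal blocks are not merely CP-equal to $F_x$ but literally equal to $F_x$ up to a global phase $e^{i\alpha_x}$, which is exactly form~\eqref{eqn:controlledisometry}.

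The key leverage is the purity hypothesis on $G$. A pure CP map on $\complexs^X\otimes A \to \complexs^X\otimes B$ is (the doubling of) a single linear map $\hat G$, so I would first reduce everything to a statement about the linear map $\hat G$, writing $\hat G = \sum_{x,y}\ket{x}\bra{y}\otimes \hat G_{xy}$ with each $F_x = \operatorname{dbl}[\hat F_x]$ for a linear $\hat F_x$ (purity of the $F_x$ is given). The controlled-process equations, after translating through doubling, force $\operatorname{dbl}[\hat G_{xx}] = \operatorname{dbl}[\hat F_x]$, hence $\hat G_{xx} = e^{i\alpha_x}\hat F_x$ for scalars of modulus one — this is the standard fact that a pure CP map determines its linear representative up to a phase. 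For the off-diagonal blocks, I would feed a \emph{superposition} control state such as $\tfrac{1}{\sqrt2}(\ket{x}+\ket{y})$ into $G$ and discard the control output (equivalently, use the $\top$ effect on the control system), exploiting the environment-structure/discarding axioms: since $G$ comes from a controlled process, the argument in the displayed equation just before Section~\ref{section:coherent-control-quantum} (no leakage) together with positivity of the doubled map forces the cross terms $\hat G_{xy}$ ($x\neq y$) to be controlled entirely by $\hat G_{xx}, \hat G_{yy}$ up to a phase — concretely one shows the $2\times 2$ Gram-type matrix of blocks must be rank one, so $\hat G_{xy} = e^{i\alpha_x}e^{-i\alpha_y}\hat F_x$ when acting through the superposition, which is precisely the phased-superposition form~\eqref{eqn:controlledisometry} with the relative phase $\alpha$ absorbing the $\alpha_x$'s.

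I would organise the write-up as: (1) reduce to linear maps via purity; (2) extract the diagonal blocks from the two SPO equations in~\eqref{eqn:coherentcontrol1}; (3) use superposed control inputs plus discarding to constrain the off-diagonal blocks; (4) assemble into form~\eqref{eqn:controlledisometry}, noting that only the relative phases $\alpha_x/\alpha_y$ are physically meaningful so a single phase function $\alpha$ on $X$ suffices (and its overall additive constant is immaterial).

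The main obstacle I anticipate is step (3): the off-diagonal constraint. Diagonalising $G$ block by block using only the two equations in~\eqref{eqn:coherentcontrol1} is straightforward, but those equations only test $G$ on \emph{classical} (basis) control inputs/outputs, so on their own they say nothing about coherences between different control labels. One genuinely needs the purity of $G$ to propagate the diagonal information to the off-diagonal blocks — the heart of the argument is that a pure (rank-one-ish) process whose ``diagonal'' is fixed has its ``off-diagonal'' forced, via a Cauchy–Schwarz / rank-one Gram matrix argument applied to the Kraus/linear representative $\hat G$. Making that step clean — in particular checking that no spurious extra environment or extra Kraus operators can sneak in, which is exactly where Definition~\ref{def:purity} of purity (no non-trivial discarded environment) is used — is the part that requires care rather than routine diagram-chasing.
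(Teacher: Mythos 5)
Your steps (1), (2) and (4) track the paper's proof, but the pivot of your argument --- that the defining equations~\eqref{eqn:coherentcontrol1} ``only test $G$ on classical (basis) control inputs/outputs'' and hence pin down only the diagonal blocks $\hat G_{xx}$ --- is a misreading of the definition, and it is exactly what forces you into the problematic step (3). Each of the two equations composes $G$ with the SPO on \emph{one} side of the control wire only, leaving the other side open. Specialised to the canonical-basis SPO, the observation equation says that for every $x$ the \emph{entire row} $(\bra{x}\otimes\id{B})\circ G$, as a CP map $\complexs^X\otimes A\to B$ with the control input still free, equals $F_x\circ(\bra{x}\otimes\id{A})$. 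Both sides are pure (composites of pure maps), so purity gives $(\bra{x}\otimes\id{B})\,\hat G = e^{i\alpha_x}\,\hat F_x\,(\bra{x}\otimes\id{A})$ at the level of linear maps, i.e.\ $\hat G_{xy}=\delta_{xy}\,e^{i\alpha_x}\hat F_x$: the off-diagonal blocks vanish for free, and a resolution of the identity $\sum_{x}\ket{x}\bra{x}$ on the control wire reassembles $\hat G=\sum_x e^{i\alpha_x}\ket{x}\bra{x}\otimes\hat F_x$, which is form~\eqref{eqn:controlledisometry}. That is essentially the whole of the paper's proof.

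Your step (3) cannot be repaired as sketched. First, the definition of a controlled process says nothing about how $G$ acts on superposed control states, so feeding $\tfrac{1}{\sqrt2}(\ket{x}+\ket{y})$ into $G$ and invoking the no-leakage identity begs the very question the proposition answers (and that identity additionally assumes the $F_x$ are normalised, which the proposition does not). Second, the diagonal blocks genuinely do not determine a pure $G$: take $X=\{0,1\}$, $F_0=F_1=0$ and $\hat G=\ket{0}\bra{1}\otimes\id{A}$; this is pure and has the correct diagonal blocks, yet is not of form~\eqref{eqn:controlledisometry}, so no Cauchy--Schwarz or rank-one Gram argument starting from the diagonal data alone can close the gap. (It is of course excluded by the full row equation above.) Third, the target you state for the cross terms, $\hat G_{xy}=e^{i\alpha_x}e^{-i\alpha_y}\hat F_x$ for $x\neq y$, is not form~\eqref{eqn:controlledisometry}: the coherently controlled process of Example~\ref{example:coherently-controlled-isometry} is block-diagonal in the control basis, with all off-diagonal blocks equal to zero.
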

\begin{proof}

    In quantum theory, the normalised SPO pairs connecting a classical system $X$ and the quantum system $\complexs^X$ are exactly the preparations and measurements in some orthonormal basis of $\complexs^X$.
    Without loss of generality, assume that the SPO pair is for the canonical basis $(\ket{x})_{x \in X}$ of system $\complexs^X$:

    \begin{equation}\label{eqn:measureprepare}
    	p = \tikzfig{prepare} \hspace{2cm} m = \tikzfig{measure}
    \end{equation}

    \noindent
    Now assume that $(G, p, m)$ is a controlled process for the family $(F_x)_{x \in X}$.
    In particular, it satisfies the following equalities for all $x \in X$, where on the right we have used the coherent control of Example~\ref{example:coherently-controlled-isometry}:

    \begin{equation}
    	\tikzfig{Gmeasure}
        \hspace{5mm} = \hspace{5mm}
        \tikzfig{controlledisometrydecomposition}
        \hspace{5mm} = \hspace{5mm}
        \tikzfig{controlledisometry2}
    \end{equation}

    \noindent
    Because $G$ is pure, the equality of CP maps above is an equality of the corresponding linear maps up to a global phase $\alpha_x$ dependent on $x \in X$.
    Using resolution of the identity $\id{\complexs^X} = \sum_{x \in X} \ket{x} \bra{x}$ and putting all the phases together in a phase gate for the canonical basis we obtain the following equality:

    \begin{equation}
        \tikzfig{Gmeasure3}
        \hspace{5mm} = \hspace{5mm}
        \tikzfig{controlledisometry3}
        \hspace{5mm} = \hspace{5mm}
        \tikzfig{controlledisometry4}
    \end{equation}

    \noindent
    This concludes our proof.
\end{proof}

\subsection*{Proof of Proposition~\ref{proposition:coherent-control-cp-nogo}}

\setcounter{theorem_c}{\value{proposition_coherent_control0cp_nogo}}
\newcounter{eq_temp}
\setcounter{eq_temp}{\value{equation}}
\setcounter{equation}{\value{eq_coherent_control0cp_nogo}}
\begin{proposition}
    Let $(F_x)_{x \in X}$ be a family of CP maps in quantum theory.
    It is not generally possible to construct a coherent control of the family which is a function of the family $(F_x)_{x \in X}$ alone, i.e. one which is independent of a choice of purification for the CP maps in the family.
    This is the same as the statement that it is not generally possible to construct a coherent control $(\Tr_E(G'), p, m)$ of the family $(F_x)_{x \in X}$ in such a way that the following equation holds for all choices of unitaries $U_x: E \rightarrow E$:
    \begin{equation}\label{eq:coherent-control-cp-nogo:appendix}
        \tikzfig{coherentcontrolnogo1}
        \hspace{5mm} = \hspace{5mm}
        \tikzfig{coherentcontrolnogo2}
    \end{equation}
    Note that the unitaries $(U_x)_{x \in X}$ correspond to all possible choices of purification for the CP maps $(F_x)_{x \in X}$.
\end{proposition}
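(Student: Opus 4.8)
The plan is to exhibit a concrete family of CP maps together with a concrete choice of purifications for which no single coherent control can satisfy Equation~\eqref{eq:coherent-control-cp-nogo:appendix} for all unitaries $U_x$ simultaneously. Following the discussion preceding the statement, the natural witness is the two-element family inspired by \cite{oi2003}: take $X = \{0,1\}$ and let $F_0, F_1: \complexs^2 \rightarrow \complexs^2$ be CP maps chosen so that their "logical" content lives in a one-dimensional sector while the purifications emit a nontrivial state on the environment $E$. Concretely, one can take $F_0$ to be a pure (e.g. isometric) map and $F_1$ to be a genuinely mixed CP map whose minimal purification $\hat F_1: \complexs^2 \rightarrow \complexs^2 \otimes E$ has Kraus rank $\geq 2$, so that its environment output state is full-rank. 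The key structural point is that the requirement~\eqref{eq:coherent-control-cp-nogo:appendix} demands invariance under applying $U_0$ to the environment of one purification and $U_1$ to the other \emph{independently}; in particular, setting $U_0 = \id{E}$ and letting $U_1$ range over all unitaries.

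The first step is to reduce the statement to a statement purely about linear maps. By Definition~\ref{definition:coherent-control-cp}, $G = \Tr_E(G')$ for a pure CP map $G'$ on $H \otimes A \rightarrow H \otimes B \otimes E$; since $(p,m)$ is the SPO pair for the canonical basis $(\ket{x})_{x\in X}$, restricting $G'$ along $p$ and then along each $\ket{x}\bra{x}$ on the control wire recovers (up to a phase, by purity, as in the proof of Proposition~\ref{proposition:coherent-control-pure-cp}) a purification of $F_x$ — call it $\hat F'_x: A \rightarrow B \otimes E$. Thus $G'$ is itself determined, branch by branch, by a family of purifications $(\hat F'_x)_{x\in X}$ of the $F_x$, and the content of~\eqref{eq:coherent-control-cp-nogo:appendix} becomes: after post-composing each branch's environment with $U_x$ and then discarding, the resulting CP map on $H\otimes A \rightarrow H \otimes B$ does not depend on the tuple $(U_x)_{x\in X}$. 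Because discarding after $U_x$ kills the unitary when $U_x$ is applied uniformly across all branches in which that environment appears, the obstruction is precisely the cross-branch interference term: in the control system, the off-diagonal matrix element between branch $x$ and branch $x'$ of $\Tr_E\big((\id{} \otimes U_x)\hat F'_x (\hat F'_{x'})^\dagger (\id{} \otimes U_{x'})^\dagger\big)$ depends on $U_x^\dagger U_{x'}$ in general.

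The second step is to compute this off-diagonal term for the witness family and show it genuinely varies. Writing $\hat F'_0$ and $\hat F'_1$ in terms of Kraus operators, the $(0,1)$ block of $G = \Tr_E(G')$ (with the control prepared in $\ket{+}\bra{+}$, say) is $\sum_k K^{(0)}_k\, \langle e^{(1)}_k \vert U_0^\dagger U_1 \vert e^{(0)}\rangle\, (K^{(1)})^\dagger$ — schematically, a linear functional of $U_0^\dagger U_1$ with nonzero coefficients because $\hat F_1$ has environment output overlapping nontrivially with that of $\hat F_0$. Choosing $U_0 = \id{E}$ and two distinct unitaries $U_1, U_1'$ that differ on the relevant environment sector then yields two distinct CP maps $G$, contradicting the asserted independence. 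Finally, one observes that these tuples $(U_x)_{x\in X}$ indeed range over all choices of purification for $(F_x)_{x\in X}$ (any two purifications with the same environment differ by a unitary on $E$, by the essential uniqueness of purification), so the failure of~\eqref{eq:coherent-control-cp-nogo:appendix} is exactly the failure of purification-independence, completing the equivalence claimed in the statement.

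The main obstacle is the second step: one must choose the witness family and its Kraus decompositions carefully enough that the cross-branch term is \emph{provably} nonconstant in $(U_x)_{x\in X}$, rather than vanishing for accidental reasons (e.g. orthogonality of environment outputs, which would trivialize the interference and make the "no-go" fail for that example). Keeping $F_0$ pure — so that its environment is trivial and $U_0$ acts on (a copy of) the same $E$ appearing in $\hat F_1$ only through the shared purification bookkeeping — and choosing $\hat F_1$ with a full-rank reduced environment state is what forces the overlap $\langle e^{(1)}_k \vert U_1 \vert e^{(0)}\rangle$ to be sensitive to $U_1$; verifying this carefully is where the real work lies.
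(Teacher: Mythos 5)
Your reduction step is essentially the paper's own argument: the paper also uses purity of $G'$ together with essential uniqueness of purification to show that $G'$ agrees branch-by-branch (up to phases and unitaries $V_x$ on $E$) with the coherent control of some family of purifications $(\hat F_x)_{x\in X}$, and then recasts invariance under a change of purification $W_x$ as Equation~\eqref{eq:coherent-control-cp-nogo:appendix} holding for all $(U_x)_{x\in X}$ via the substitution $U_x := V_x W_x V_x^\dagger$. Where you genuinely diverge is at the end: the paper stops there and simply asserts that ``for a general such $G'$ this equation cannot always be made to hold,'' whereas you propose to close the argument with an explicit two-element witness family and a computation of the cross-branch interference term. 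That addition is welcome --- it is exactly the part the paper leaves implicit --- and your identification of the obstruction is correct: after applying $U_x$ on branch $x$ and tracing out $E$, the $(x,x')$ off-diagonal block of the resulting CP map is $\Tr_E\bigl[(\id{}\otimes U_{x'}^\dagger U_x)\,\hat F'_x\,\rho\,(\hat F'_{x'})^\dagger\bigr]$, which depends on $U_{x'}^\dagger U_x$ whenever it is nonzero.

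Two remarks on your sketched second step. First, it can be simplified considerably: you do not need $\hat F_1$ to have a full-rank reduced environment state or a careful overlap analysis. Taking $U_0 = \id{E}$ and $U_1 = e^{i\theta}\id{E}$ already multiplies the $(0,1)$ block by $e^{-i\theta}$, so the equation fails as soon as $\Tr_E\bigl[\hat F'_0\,\rho\,(\hat F'_1)^\dagger\bigr] \neq 0$ for some input $\rho$; it then only remains to observe that if this cross term vanished identically for \emph{every} admissible $G'$, the ``coherent'' control would coincide with the classical control, which is not the case for, say, two non-orthogonal isometries (where Proposition~\ref{proposition:coherent-control-pure-cp} pins down the genuinely coherent form). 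Second, be slightly careful with the phrase ``any two purifications with the same environment differ by a unitary on $E$'': essential uniqueness gives a partial isometry in general, which extends to a unitary here, but this is the same point the paper glosses over, so it is not a defect relative to the paper's own proof. Overall your proposal is correct in structure and, modulo completing the witness computation you yourself flag as outstanding, supplies a more self-contained argument than the one in the appendix.
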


\setcounter{equation}{\value{eq_temp}}

\begin{proof}

    Consider a coherent control $(\Tr_E(G'), p, m)$ of a family $(F_x)_{x \in X}$ of CP maps, where $G'$ is pure.
    In particular, the following equality is satisfied:

    \begin{equation}
    	\tikzfig{V0P0measuremixed}
        \hspace{5mm} = \hspace{5mm}
        \sum_{x \in X}
        \hspace{5mm}
        \tikzfig{controlledisometrydecompositionmixed}
    \end{equation}

    \noindent
    If $\hat{F}_x$ is a chosen purification of $F_x$ for each $x \in X$---without loss of generality all with environment $E$---then we can create the coherent control of the family $(\hat{F}_x)_{x \in X}$ and obtain the following equality:

    \begin{equation}
    	\tikzfig{V0P0measuremixedi}
        \hspace{5mm} = \hspace{5mm}
        \tikzfig{xmeasuredprocess}
        \hspace{5mm} = \hspace{5mm}
        \tikzfig{controlledprocess}
    \end{equation}

    \noindent
    By essential uniqueness of purification, the following equality of pure CP maps must hold for some choice of unitary $V_x: E \rightarrow E$, dependent on each specific value of $x \in X$:

    \begin{equation}
    	\tikzfig{V0P0measuremixednodiscarding}
        \hspace{5mm} = \hspace{5mm}
        \tikzfig{controlledprocessisometry}
    \end{equation}

    \noindent
    The equality above is equivalently an equality of linear maps up to a global phase $\varphi_x$, also dependent on each specific value of $x \in X$.
    We can therefore put all $x \in X$ together and obtain the following equality of pure CP maps:

    \begin{equation}
    	\tikzfig{V0P0real}
        \hspace{5mm} = \hspace{5mm}
        \tikzfig{controlledprocessisometrynew}
    \end{equation}

    \noindent
    Any alternative choice of purification for each CP maps $F_x$ can be obtained by applying some unitary $W_x: E \rightarrow E$ to the environment of our current choice of purification $\hat{F}_x$.
    If the coherent control is to be invariant under this choice of purification, the following equality must hold for all possible choices of unitaries $(W_x)_{x \in X}$:

    \begin{equation}
    	\tikzfig{controlledprocessisometrynew2discarding}
        \hspace{5mm} = \hspace{5mm}
        \tikzfig{controlledprocessisometrynewnophasediscarding}
    \end{equation}

    \noindent
    For each $x \in X$, we can define a unitary $U_x := V_x W_x V_x^\dagger$ such that $U_x V_x = V_x W_x$, so that the equality above for all possible choices of unitaries $(W_x)_{x \in X}$ can be equivalently recast as the equality below for all possible choices of unitaries $(U_x)_{x \in X}$:

    \begin{equation}
        \tikzfig{controlledprocessisometrynew2discarding2}
        \hspace{5mm} = \hspace{5mm}
        \tikzfig{controlledprocessisometrynewnophasediscarding}
    \end{equation}

    \noindent
    The above is equivalent to Equation~\eqref{eq:coherent-control-cp-nogo:appendix} in the statement of this Proposition holding for all $(U_x)_{x \in X}$:

    \setcounter{eq_temp}{\value{equation}}
    \setcounter{equation}{\value{eq_coherent_control0cp_nogo}}

    \begin{equation}
        \tikzfig{coherentcontrolnogo1}
        \hspace{5mm} = \hspace{5mm}
        \tikzfig{coherentcontrolnogo2}
    \end{equation}

    \setcounter{equation}{\value{eq_temp}}

    \noindent
    For a general such $G'$---i.e. for a general choice of $(\hat{F}_x)_{x \in X}$---this equation cannot always be made to hold for all $(U_x)_{x \in X}$, leading to the statement of the proposition.

\end{proof}

\subsection*{Proof of Proposition~\ref{proposition:coherent-control-causal-orders}}

\setcounter{theorem_c}{\value{proposition_coherent_control0causal_orders}}
\begin{proposition}

    Let $\Phi$ be an indefinite causal scenario and $\Delta$ be a diagram over $\Phi$ in quantum theory.
    For each possible choice of phase gate $P(\varphi)$ for the computational basis of $\complexs^{\definiteCausAssoc{\Phi}}$, the coherent control of $\Delta$ with phase $P(\varphi)$ is well-defined independently of the choice of purification for the processes in $\Delta$.

\end{proposition}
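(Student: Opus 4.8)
The plan is to make rigorous, for an arbitrary indefinite causal scenario, the argument already sketched for the quantum switch around equations~\eqref{eq:switchPurification1} and~\eqref{eq:switchPurification2}: any two purifications of $\Delta$ differ by a fixed family of unitaries on the environments, the \emph{same} family is carried through every branch of the coherent control, and these unitaries are annihilated once the environments are discarded.

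First, unwind Definition~\ref{definition:coherent-control-causal-orders}: the coherent control of $\Delta$ with phase $P(\varphi)$ is $\Tr_E(G')$, where $\Delta^{pure}$ is a purification of $\Delta$ with environments $(E_\omega)_{\omega\in\Omega}$, $E=\bigotimes_\omega E_\omega$, and $G'$ is the pure coherent control of the family $\bigl(\diagramproc{\restrict{\Delta^{pure}}{\Theta}}\bigr)_{\Theta\in\definiteCausAssoc{\Phi}}$ of pure CP maps, which by Proposition~\ref{proposition:coherent-control-pure-cp} has the form~\eqref{eqn:controlledisometry} and is therefore uniquely determined once the phase is pinned to $P(\varphi)$. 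What must be shown is that $\Tr_E(G')$ is unchanged if $\Delta^{pure}$ is replaced by another purification $\Delta^{pure\prime}$ of $\Delta$.

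Second, reduce the comparison to unitaries: after padding to a common environment $E_\omega$ at each event (exactly as in the proof of Proposition~\ref{proposition:coherent-control-cp-nogo}), essential uniqueness of purification gives, for every event $\omega$ and every fixed pair $i_\omega\in I_\omega,\,o_\omega\in O_\omega$, a unitary $U_\omega\colon E_\omega\to E_\omega$ such that $\hat{F}'_\omega(o_\omega\mid i_\omega)$ is $\hat{F}_\omega(o_\omega\mid i_\omega)$ post-composed with $\operatorname{dbl}[U_\omega]$ on the environment. Third --- the crux --- note that for every $\Theta\in\definiteCausAssoc{\Phi}$ we have $\events{\Theta}=\Omega$, so each event occurs exactly once in $\restrict{\Delta^{pure}}{\Theta}$, and in the purification of $\Phi$ the fresh label $\varepsilon_\omega$ lies in both $\Sigma^{out}_\omega$ and $\Pi^{out}$, so the wire $E_\omega$ runs straight from event $\omega$ to the boundary and is never fed into any other process. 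Consequently, for \emph{every} $\Theta$ simultaneously, $\diagramproc{\restrict{\Delta^{pure\prime}}{\Theta}}$ is $\diagramproc{\restrict{\Delta^{pure}}{\Theta}}$ post-composed with $\operatorname{dbl}\!\bigl(\bigotimes_{\omega}U_\omega\bigr)$ on the environment outputs --- the same branch-independent channel. This is precisely where the causal-order construction escapes the no-go of Proposition~\ref{proposition:coherent-control-cp-nogo}: there the environment-unitary could vary from branch to branch, whereas here the very same purified map $\hat{F}_\omega$ sits at event $\omega$ in all branches.

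Fourth, push this correction through the coherent control and discard it. Because $\bigotimes_\omega U_\omega$ acts only on the environment outputs and is identical in every branch, it commutes with the phase gate $P(\varphi)$ (which touches only the control system), so by the form~\eqref{eqn:controlledisometry} the pure coherent control of the new family is $\operatorname{dbl}\!\bigl(\bigotimes_\omega U_\omega\bigr)\circ G'$. Applying $\Tr_E$ and using that each $U_\omega$ is unitary, hence $\Tr_{E_\omega}$ kills conjugation by $U_\omega$, yields $\Tr_E\!\bigl(\operatorname{dbl}\!\bigl(\bigotimes_\omega U_\omega\bigr)\circ G'\bigr)=\Tr_E(G')$. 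Since any two purifications of $\Delta$ are related as in the second step, the coherent control of $\Delta$ with phase $P(\varphi)$ is independent of the purification. The genuinely load-bearing step is the third one: translating ``change of purification'' into a single, branch-uniform post-composition on the boundary environments requires care with the combinatorics of the definite causal scenarios compatible with $\Phi$ and with how the purified scenario carries each $E_\omega$ to the boundary; once that is in place, the remainder is routine manipulation of traces of unitaries.
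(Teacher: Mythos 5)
Your proposal is correct and follows essentially the same route as the paper's own proof: identify that a change of purification factors as per-event unitaries $U_\omega(o_\omega\mid i_\omega)$ on the environments, observe that because the same purified map sits at event $\omega$ in every branch $\Theta\in\definiteCausAssoc{\Phi}$ the resulting correction $\bigotimes_\omega U_\omega$ is branch-uniform and can be pulled to the boundary past the (trivially controlled) coherent control, and conclude that it is absorbed by the discarding map on the global environment. Your additional care in steps one and three (uniqueness of the pure control via Proposition~\ref{proposition:coherent-control-pure-cp}, and the observation that each $\varepsilon_\omega$ wire runs untouched to the boundary) only makes explicit what the paper leaves implicit.
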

\begin{proof}

    The trick to circumventing Proposition~\ref{proposition:coherent-control-cp-nogo} is the following: the choice of purification is done locally at the level of the processes associated to each event, not globally at the level of the processes associated to the diagrams.
    As a consequence, the unitary maps implementing the change of purification always factor over the individual events.
    Furthermore, the same CP map appears attached to the same event in all diagrams, marking a difference with the situation discussed in \cite{oi2003}.

    More specifically, assume that the purification $\Delta^{pure}$ used in constructing the coherent control of $\Delta$ uses purifications $\hat{F}_\omega(o_\omega | i_\omega)$ for the CP maps $F_\omega(o_\omega | i_\omega)$ appearing in $\Delta$.
    Another choice of purification corresponds to chosen unitaries $U_\omega(o_\omega | i_\omega)$, yielding as the following purifications instead:

    \begin{equation}
        \left(\id{} \otimes U_\omega(o_\omega | i_\omega)\right)
        \circ
        \hat{F}_\omega(o_\omega | i_\omega)
    \end{equation}

    \noindent
    We now fix global classical inputs and output $\underline{i} \in \underline{I}$ and $\underline{o} \in \underline{O}$ and pull all the unitaries to the boundary of the diagrams $\diagramproc{\restrict{\Delta}{\Theta}}$ for the definite scenarios $\Theta \in \definiteCausAssoc{\Phi}$.
    The same unitary $U(\underline{o} | \underline{i}): E \rightarrow E$ then appears on the global environment in all diagrams $\diagramproc{\restrict{\Delta}{\Theta}}$:

    \begin{equation}
        U(\underline{o} | \underline{i}) := \bigotimes_{\omega \in \Omega}U_\omega(o_\omega | i_\omega)
    \end{equation}

    \noindent
    The coherent control over $\complexs^{\definiteCausAssoc{\Phi}}$ is thus trivial, so that the unitary $U(\underline{o} | \underline{i})$ can be cancelled by the discarding map on the global environment $E$.
    This means that the coherent control of $\Delta$ is actually independent of the specific choice of purifications $\hat{F}_\omega$, i.e. that it is well-defined as a function of $\Delta$ alone.

\end{proof}

\end{document}